\theoremstyle{plain}
\newtheorem{theorem}{Theorem}[section]
\newtheorem{proposition}[theorem]{Proposition}
\theoremstyle{definition}
\theoremstyle{remark}
\newcommand{\tr}[1]{{#1}^\top}
\newcommand{\vect}[1]{\mathbf{#1}}
\newcommand{\res}[2]{#1}
\newcommand{\proj}{g}
\newcommand{\tsf}{\mathcal{T}}
\newcommand{\pp}{\vect{p}}
\newcommand{\vone}{\mathbbm{1}}
\newcommand{\KL}{D_{\mr{KL}}}
\newcommand{\dec}{f_D}
\newcommand{\enc}{f_E}
\DeclareMathOperator{\Expect}{\mathbb{E}}
\newcommand{\entr}{\mathcal{H}}
\DeclareMathOperator{\entropy}{\entr}
\newcommand{\MI}{\mathcal{I}}
\newcommand{\Real}{\mathbb{R}}
\newcommand{\img}{\mathbf{x}}
\newcommand{\data}{\mathcal{D}}
\newcommand{\mr}[1]{\text{#1}}
\newcommand{\loss}[1]{\mathcal{L}_{\mr{#1}}}
\newcommand{
 \tabincell}[2]{\begin{tabular}{@{}#1@{}}
 #2
\end{tabular}
}
\newcommand{\Pjoint}{\mathcal{P}_{\mr{joint}}}
\newcommand{\Ppseudo}{\mathcal{P}_{\mr{pseud}}}
\newcommand{\embed}{\mathbf{s}}
\newcommand{\pHat}{\widehat{\mathbf{p}}}
\newcommand{\pTilde}{\widetilde{\mathbf{p}}}
\newcommand{\mHat}{\widehat{\mathbf{p}}}
\newcommand{\mTilde}{\widetilde{\mathbf{p}}}
\newcommand{\zz}{\mathbf{z}}
\newcommand{\uu}{\mathbf{u}}
\newcommand{\ident}{\mr{I}}
\newcommand{\phz}{\phantom{0}}
\newcommand{\visCluster}[2]{
\includegraphics[width=.124\textwidth]{boundary-aware-figure/new_clusters/#2/case#1/image.png} &
\includegraphics[width=.124\textwidth]{boundary-aware-figure/new_clusters/#2/case#1/gt.png} &
\includegraphics[width=.124\textwidth]{boundary-aware-figure/new_clusters/#2/case#1/superpixel.png} &
\includegraphics[width=.124\textwidth]{boundary-aware-figure/new_clusters/#2/case#1/imsatpp.png} &
\includegraphics[width=.124\textwidth]{boundary-aware-figure/new_clusters/#2/case#1/iic.png} &
\includegraphics[width=.124\textwidth]{boundary-aware-figure/new_clusters/#2/case#1/alpha_0.png} &
\includegraphics[width=.124\textwidth]{boundary-aware-figure/new_clusters/#2/case#1/alpha_025.png} &
\includegraphics[width=.124\textwidth]{boundary-aware-figure/new_clusters/#2/case#1/alpha_050.png} &
\includegraphics[width=.124\textwidth]{boundary-aware-figure/new_clusters/#2/case#1/alpha_075.png} &
\includegraphics[width=.124\textwidth]{boundary-aware-figure/new_clusters/#2/case#1/alpha_1.png} 
}
\newcommand{\visinpsect}[2]{
\includegraphics[width=.124\textwidth]{boundary-aware-figure/seg_inspection/#1/case#2/gt.png} &
\includegraphics[width=.124\textwidth]{boundary-aware-figure/seg_inspection/#1/case#2/ps.png} &
\includegraphics[width=.124\textwidth]{boundary-aware-figure/seg_inspection/#1/case#2/mt.png} &
\includegraphics[width=.124\textwidth]{boundary-aware-figure/seg_inspection/#1/case#2/adv.png} &
\includegraphics[width=.124\textwidth]{boundary-aware-figure/seg_inspection/#1/case#2/contrast-pre.png} &
\includegraphics[width=.124\textwidth]{boundary-aware-figure/seg_inspection/#1/case#2/ours-pre.png} &
\includegraphics[width=.124\textwidth]{boundary-aware-figure/seg_inspection/#1/case#2/contrast-mt.png} &
\includegraphics[width=.124\textwidth]{boundary-aware-figure/seg_inspection/#1/case#2/ours-mt.png}
}
\newcommand*{\correspondingauthor}{\thanks{Corresponding author}}
\title{Boundary-aware Information Maximization for  Self-supervised Medical Image Segmentation}
\author{Jizong Peng\correspondingauthor  \\
  ETS Montreal \\
  \texttt{jizong.peng.1@etsmtl.net} \\
 \And
  Ping Wang \\ 
   ETS Montreal \\
  \texttt{ping.wang.1@ens.etsmtl.ca} \\
 \And
 Christian Desrosiers \\ 
   ETS Montreal \\
  \texttt{christian.desrosiers@etsmtl.ca} \\
 \And
 Marco Pedersoli \\
   ETS Montreal \\
  \texttt{marco.pedersoli@etsmtl.ca} \\
 }
\begin{document}
    \maketitle

	
	
\title{Diversified Multi-prototype Representation for Semi-supervised Segmentation}
\author{Jizong Peng\correspondingauthor  \\
  ETS Montreal \\
  \texttt{jizong.peng.1@etsmtl.net} \\
 \And
 Christian Desrosiers \\ 
   ETS Montreal \\
  \texttt{christian.desrosiers@etsmtl.ca} \\
 \And
 Marco Pedersoli \\
   ETS Montreal \\
  \texttt{marco.pedersoli@etsmtl.ca} \\
 }




\begin{abstract}
	Unsupervised pre-training has been proven as an effective approach to boost various downstream tasks given limited labeled data. Among various methods, contrastive learning learns a discriminative representation by constructing positive and negative pairs. However, it is not trivial to build reasonable pairs for a segmentation task in an unsupervised way. In this work, we propose a novel unsupervised pre-training framework that avoids the drawback of contrastive learning. Our framework consists of two principles: unsupervised over-segmentation as a pre-train-task using Mutual information maximization and boundary-aware preserving learning. Experimental results on two benchmark medical segmentation datasets reveal our method's effectiveness in improving segmentation performance when few annotated images are available.
\end{abstract}

\section{Introduction}

Supervised deep learning approaches have achieved outstanding performance in a wide range of segmentation tasks~\citep{ronneberger2015unet,badrinarayanan17segnet,chen18deeplab}. However, these approaches often require a large amount of labeled images which are difficult to obtain for medical imaging applications~\citep{cheplygina2019not,peng2021medical}. Unsupervised representation learning~\citep{jing2020self,liu2021self} has emerged as an effective technique to boost the performance of a segmentation model without the need for annotated data. In such technique, a model is pre-trained to perform a given pretext task, for example puzzle-solving~\citep{noroozi2016unsupervised,taleb2021multimodal}, rotation prediction~\citep{komodakis2018unsupervised,gidaris2018unsupervised}, colorization~\citep{zhang2016colorful} or contrastive-based instance discrimination ~\citep{hjelm2018learning,chen2020simple,he2020momentum}, and then fine-tuned with a small set of labeled examples. Among these self-supervised methods, contrastive learning has become a prevailing  strategy for pre-training medical image segmentation models~\citep{chaitanya2020contrastive,zeng2021positional,peng2021self}. The core idea of this strategy is to learn, without pixel-wise annotations, an image representation which can discriminate related images (e.g., two transformations of the same image) from non-related ones. 
Most contrastive learning approaches for segmentation apply a contrastive loss on the \emph{global} representation of images, which typically corresponds to the features produced by the network's encoder. Experimental results have shown that pre-training the encoder with this loss and then fine-tuning the whole network with few labeled examples can lead to significant improvements~\citep{peng2021self}.

Recent works have also demonstrated the benefit of using contrastive learning on the decoder's feature maps during pre-training~\citep{chaitanya2020contrastive,peng2021self}. In this case, the contrastive loss is applied at each position of the feature map, which helps learn a \emph{local} representation of the image. 
However, choosing the pairs of positive and negative examples that need to be contrasted is more challenging for these dense feature maps without pixel-wise annotations. Firstly, the meta-information in medical data (e.g., subject ID, slice position, etc.) is typically found at the image level, and is therefore not applicable to local contrastive learning. To tackle this problem, current methods usually adopt a stride sampling strategy where, for a given anchor position in the feature map, local representations located at a sufficient distance are regarded as negative, while those that are close but obtained under different image transforms are considered as positive~\citep{chaitanya2020contrastive}. As we show in our experiments (see Section \ref{sec:results1}), this weak spatial prior unfortunately leads to low improvements when used in pre-training. Another problem comes from the fact that medical images for segmentation are often dominated by non-informative background regions, which reduces the effectiveness of local contrastive learning in this setting. 
Additionally, standard contrastive learning techniques such as~\citep{hjelm2018learning} typically need large batch sizes to have a sufficient amount of high-quality negative example pairs. This constraint can be hard to meet in the case of learning dense features. Despite important efforts, the improvement brought by \emph{local} contrastive learning in medical image segmentation remains relatively marginal  \citep{chaitanya2020contrastive}.
 
In this paper, we propose a boundary-aware information maximization approach for unsupervised representation learning and experimentally demonstrate its usefulness for medical image segmentation. Our approach focuses on the dense features in the decoder of a segmentation network, and seeks to group them into clusters that correspond to meaningful regions in the image. The proposed learning objective is based on the Information Invariant Clustering (IIC) method~\citep{ji2019invariant}, but overcomes three major drawbacks of this method: i) its optimization difficulty, caused in part by minimizing the entropy of cluster assignments,  which often leads to sub-optimal solutions; ii) its lack of clustering consistency for different random transformations; iii) the poor correspondence of clusters obtained by this method with region boundaries in the image. As illustrated in Fig.~\ref{fig:alpha_explain}, our boundary-aware information maximization approach learns clusters that better correspond to relevant anatomical structures of the image. This is achieved by improving IIC in two important ways. First, we augment the learning objective of IIC, which maximizes the mutual information of local feature embeddings for two different transformations of the same image, to make the joint cluster probability close to a uniform diagonal matrix. This improves optimization and leads to clusters that are well balanced and also consistent across different image transformations. Second, we propose a boundary-aware loss based on the cross-correlation between the spatial entropy of clusters and image edges, which helps the learned cluster be more representative of important regions in the image. Our experimental results reveal this loss to be especially effective for the segmentation of regions with irregular shape. 

Compared to contrastive learning, our method does not require to compute  positive or negative pairs, and does not need a sophisticated sampling mechanism or large batch sizes. Through an extensive set of experiments involving four different medical image segmentation tasks, we demonstrate the high effectiveness of our unsupervised representation learning method for pre-training a segmentation model, before fine-tuning it with few labeled images. Our results show the proposed method to outperform by a large margin several state-of-the-art self-supervised and semi-supervised approaches for segmentation, and to reach a performance close to full supervision with only a few labeled examples.

\section{The proposed method}

In unsupervised representation pre-training, we are given a set of $N$ images $\data = \{ \img_i \}_{i=1}^{N}$, with $\img_i \in \mathbb{R}^\Omega$, where $\Omega$ is the image space. We seek to learn a useful representation by pre-training a deep segmentation network $f_\theta(\cdot) = \dec(\enc(\cdot))$ comprised of encoder $\enc(\cdot)$ and a decoder $\dec(\cdot)$. In our setting, a good representation can boost segmentation performance when fine-tuning the whole network with very limited labeled data. To help understand our method, we summarize in Table~\ref{tab:defintions} the main notations used in the paper. 
\begin{table}[t!]
	\footnotesize
	\centering
	\caption{Notations used in the paper}
	\begin{adjustbox}{max width=0.75\linewidth}
		{\renewcommand{\arraystretch}{1.2}
			\begin{tabular}{ll}
\toprule
Image dataset: & $\data=\{\img_i \in \Real^{\Omega}\}_{i=1}^{N}$\\ 
Pixel index: & $\Omega = [1,\dots, W\times H]$ \\
Dense embedding index: & $\Omega' = [1,\dots, W/N\times H/N]$\\
$(K\!-\!1)$-simplex: & $\Delta^K = \{\pp \in [0, 1]^K, \sum\limits_k p_k =1 \}$ \\ 
Dense embedding: & $\embed = s(\img) \in \Real^{\Omega'\times C}$\\ 
Cluster projection: & $\proj(\embed) \in \Delta^{\Omega' \times K}$\\
\midrule
Image transform: & $\tsf(\cdot)$ \\
Cluster probabilities: & $\pHat_i = \proj(s(\tsf(\img_i)))$, \,$\pTilde_i = \proj(\tsf(s(\img_i)))$\\ 
Cluster marginals: & $\mHat = \frac{1}{N} \sum_{i=1}^{N} \pHat_i$, \,$\mTilde = \frac{1}{N} \sum_{i=1}^{N} \pTilde_i $ \\
Joint distribution: & $\Pjoint =\frac{1}{N} \sum_{i=1}^{N} \pHat_i \cdot \pTilde_i^{\intercal}$ \\
\midrule
Entropy: & $\entr(X) = -\Expect_{X} [\log p(X)]$ \\
Joint entropy: & $\entr_{\text{joint}}(X,Y) = -\Expect_{X,Y} [ \log p(X,Y)]$ \\
\bottomrule
\end{tabular}
}
\label{tab:defintions}
\end{adjustbox}
\end{table}

Our representation operates on dense embeddings taken from some intermediate layer of the decoder. Our goal is to group these local embeddings into clusters reflecting meaningful anatomical structures in the input images, without requiring any labels. Three separate loss functions are used to achieve this goal. The first loss maximizes the MI between corresponding local feature embeddings obtained from an input image and its transformed version. Since computing MI between continuous variables is complex, as in recent works \citep{ji2019invariant,peng2021boosting}, we project features to a discrete space representing clusters, where MI is easy to obtain. However, maximizing the MI between cluster assignments has two important drawbacks. Firstly, it assumes that the number of clusters is known in advance and that these clusters are balanced (i.e., represent regions of the same size in the image). Secondly, as it involves minimizing entropy, the direct optimization of MI often leads to poor local minima, as the network becomes quickly confident in cluster assignments that are not useful (see the clusters in Figure \ref{fig:alpha_explain}, for $\alpha\!\!=\!\!0$). Our first loss term addresses this problem by combining two complementary objectives: 1) minimizing the entropy of the cluster assignment joint distribution, which encourages clusters to be balanced and confident but is also flexible to ignore some irrelevant clusters; 2) making the matrix of this joint distribution close to a diagonal matrix, which helps the optimization avoid poor minina and learn a better representation.
Another problem with the simple MI maximization approach for unsupervised representation learning is that the clusters may not align with geometric cues such as edges in the input image. In the second loss of our model, we tackle this problem by forcing the regions with high cluster entropy, which corresponds to boundaries between clusters, to be correlated with edges in the image. 
\begin{figure}[b]
 \centering
 \parbox{0.6\linewidth}{
 \includegraphics[width=1\linewidth]{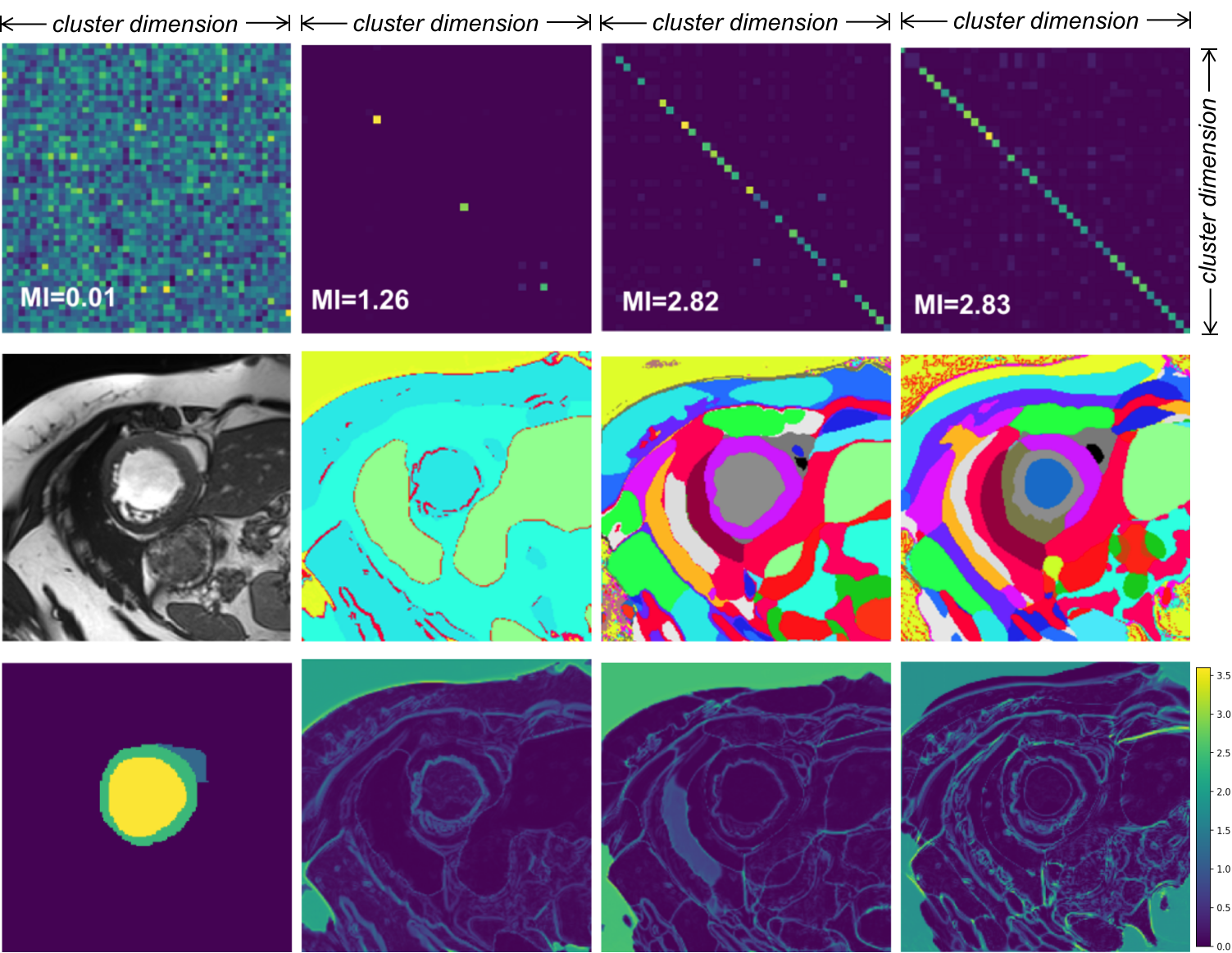} 
 \begin{tabular}{p{.21\linewidth}p{.2\linewidth}p{.19\linewidth}p{.18\linewidth}}
 & $\alpha=0.0$ & $\alpha=0.5$ & $\alpha=1.0$ 
 \end{tabular}
 }
 \caption{Influence of $\alpha$ on joint matrix $\Pjoint$ (first row), cluster assignment (second row), and the uncertainty of the cluster (third row).
 The first column shows the joint matrix before optimization, the input image and the groud-truth segmentation respectively. Using a combination of MI and cross-entropy loss ($\alpha=0.5$) provides the most meaningful unsupervised segmentation.}
 \label{fig:alpha_explain}
\end{figure}
Finally, to help the network capture the global context of images, we include a contrastive learning loss that exploits available meta-labels (e.g., slice position in a MRI volume) to make the global features obtained by the encoder $f_E$ similar for images with the same meta-label. We present a conceptual diagram of our proposed method in Fig.~\ref{fig:conceptual} of the Appendix and detail the three loss functions in the following sub-sections.

\subsection{Improved MI-based loss for \emph{dense} pre-training}

We seek to cluster the dense embeddings in feature maps $\embed$ taken from a given hidden layer of the decoder $\dec(\cdot)$. Following \citep{peng2021boosting}, we use mutual information maximization to perform clustering. The MI between two random variables $X$ and $Y$ (i.e., the cluster assignment for two images) corresponds to the KL divergence between their joint distribution $p(X,Y)$ and the product of their marginal distributions $p(X)$ and $p(Y)$:
\begin{equation}
	\MI(X,Y) \, = \, \KL\big(p(X,Y)\,||\,p(X)\,p(Y)\big)
\end{equation}
Alternatively, MI can also be defined as the difference between the combined entropy of marginals and the entropy of the joint distribution:
\begin{align}
	 \MI(X,Y) \, = &\, \entr(X) + \entr(Y) - \entr(X,Y)\nonumber\\
	  \ = &\, -\Expect_{X}[\log\Expect_Y[p(X,Y)]] -\Expect_{Y}[\log\Expect_X[p(X,Y)]]  + \Expect_{X,Y} [\log p(Y,X)]
	\label{equ:mi}
\end{align}

where $\entr(\cdot)$ is the entropy of the variable. This definition reveals that maximizing MI leads to high-entropy (uniform) distributions for $X$ and $Y$, thus avoiding trivial solutions assigning all examples to a single cluster. It also results in a low entropy of the joint distribution, corresponding to confident cluster assignments.

\begin{figure*}[t]
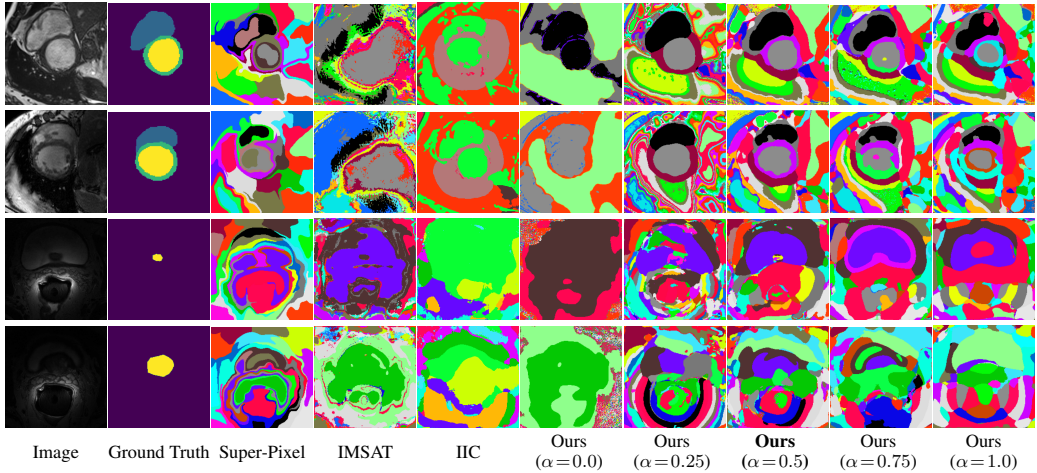

	\centering
	\tiny
	\begin{adjustbox}{max width=1\textwidth}
		\renewcommand{\arraystretch}{1}
		\setlength{\tabcolsep}{0.5pt}
		\begin{small}
			\begin{tabular}{cccccccccc}
				\visCluster{3}{acdc} \\
				\visCluster{4}{acdc} \\
				\visCluster{1}{prostate} \\
				\visCluster{2}{prostate} \\
				[-0.1mm]
				Image & Ground Truth & Super-Pixel & IMSAT & IIC  & \tabincell{c}{Ours \\ ($\alpha\!=\!0.0$)} & \tabincell{c}{Ours \\ ($\alpha\!=\!0.25$)} & \textbf{\tabincell{c}{Ours \\ ($\alpha\!=\!0.5$)}} & \tabincell{c}{Ours \\ ($\alpha\!=\!0.75$)} & \tabincell{c}{Ours \\ ($\alpha\!=\!1.0$)} 
			\end{tabular}
		\end{small}
	\end{adjustbox}
	\caption{Visual inspection of cluster assignments for unsupervised pre-training}
	\label{fig:visual_cluster_inspection}
\end{figure*}

Let $\pHat_i = \proj(s(\tsf(\img_i)))$ and \,$\pTilde_i = \proj(\tsf(s(\img_i)))$ be cluster probabilities in feature maps from a given layer of the decoder, obtained by applying a random transformation $\tsf(\cdot)$ on the input image $\img_i$ or the feature maps $s(\img_i)$. Function $\proj$ is a $1\!\times\!1$ convolutional layer followed by a $K$-way softmax projecting the feature maps to a distribution over $K$ clusters. As in IIC, we estimate the joint distribution using the average outer product between cluster probabilities $\pHat_i$ and $\pTilde_i$:
\begin{equation}
	\Pjoint =
	\frac{1}{N} \sum_{i=1}^{N} \pHat_i \cdot \pTilde_i^{\intercal}.
\end{equation}
$\Pjoint$ thus has a dimensionality of $K\!\times\!K$, and $\Pjoint^{(j,k)}$ is the joint probability of assigning $s(\tsf(\img_i))$ to cluster $j$ and $\tsf(s(\img_i))$ to cluster $k$. Following Equ. (\ref{equ:mi}), the MI between the corresponding random variables $X$ and $Y$ can be written as
\begin{equation}
\MI(X,Y) \, = \, \entropy(\mHat) + \entropy(\mTilde) - \entropy(\Pjoint) 
\label{equ:entropy}
\end{equation}
where $\mHat\!=\!\tfrac{1}{N} \sum\limits_i \pHat_i$, $\mTilde\!=\!\tfrac{1}{N} \sum\limits_i \pTilde_i$ are the cluster marginals which can computing by summing over the rows or columns of the joint distribution matrix. Maximizing the entropy of marginals encourages the network to assign an even number of samples to each cluster, and avoids trivial solutions where most clusters are empty. On the other hand, minimizing the entropy of the joint, $\entropy(\Pjoint)$, forces the network to have confident cluster assignments. 

Clustering the dense embeddings by maximizing MI poses two optimization problems. First, since we are minimizing the entropy of the joint, the network can get stuck in confident but incorrect cluster assignments which remain the same throughout optimization. Another problem stems from the fact that MI is invariant to the ordering of clusters, hence any permutation of the joint distribution matrix yields an equivalent solution. The challenge of maximizing MI is illustrated in the first row of Fig.~\ref{fig:alpha_explain}, where the left-most image is the initial joint matrix $\Pjoint$ before optimization and the one in the second column ($\alpha=0$) is $\Pjoint$ after maximizing MI. We see that only a few clusters are actually used, making the entropy of marginals low and therefore also the MI. 

To alleviate these problems, we consider the entropy of the joint distribution, given by
\begin{equation}
	\entropy(\Pjoint) \, = \, - \sum_{j=1}^{K} \sum_{k=1}^{K} \Pjoint^{(j,k)}\log \Pjoint^{(j,k)}.
\end{equation} 
A solution where cluster assignments are balanced, confident and perfectly consistent across transformations, would give a joint distribution matrix with diagonal elements equal to $1/K$ and off-diagonal elements to $0$. To guide the optimization toward this desirable solution, we introduce a pseudo-label of the joint matrix $\Ppseudo = \tfrac{1}{K}\mr{I}_K$, where $\mr{I}_K$ is the $K\!\times\!K$ identity matrix, and modify the entropy of the joint as follows:
\begin{equation}\label{eq:mod-joint-entropy}
	\entropy'_{\alpha}(\Pjoint) = - \sum_{j=1}^{K} \sum_{k=1}^{K} \!\big((1-\alpha)\cdot\Pjoint^{(j,k)} + \alpha\Ppseudo^{(j,k)}\big)\log \Pjoint^{(j,k)}
\end{equation}
In this modified formulation, $\alpha$ is a mixing coefficient ranging from 0 to 1. If $\alpha$ equals to 0, $\entropy'_\alpha(\Pjoint)$ reduces to $\entropy(\Pjoint)$, while $\alpha=1$ corresponds to a cross-entropy loss guiding the joint matrix towards the pre-defined diagonal solution $\Ppseudo$. 

Since the joint distribution matrix is computed over a batch of examples, minimizing the cross-entropy between $\Ppseudo$ and $\Pjoint$ is not the same as minimizing the cross-entropy between individual cluster assignments $\pHat_i$ and $\pTilde_i$. Nevertheless, a relationship can be derived between these two concepts, as described in the following proposition.
\begin{proposition}\label{the-proposition}
The term added in (\ref{eq:mod-joint-entropy}) corresponds the cross-entropy between the diagonal joint  $\Ppseudo\!=\!\tfrac{1}{K} \ident_K$ and $\Pjoint$, which is bounded as follows:
\begin{equation}
\log K \, \leq \, \entropy(\tfrac{1}{K}\ident_K, \Pjoint) \, \leq \, \frac{1}{N}\sum_{i=1}^N\entropy(\uu,\pHat_i) + \entropy(\uu,\pTilde_i),
\end{equation}
where $\uu$ is the vector such that $u_k = \tfrac{1}{K}$ for $k=1,\ldots,K$.
\end{proposition}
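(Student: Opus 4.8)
The plan is to first unpack the matrix cross-entropy into a scalar expression and then bound it on both ends using only two elementary facts: the non-negativity of the Kullback--Leibler divergence and the concavity of the logarithm. Writing out $\entropy(\Ppseudo,\Pjoint) = -\sum_{j,k} \Ppseudo^{(j,k)}\log\Pjoint^{(j,k)}$ with $\Ppseudo = \tfrac{1}{K}\ident_K$, and noting that $\Ppseudo$ has nonzero entries only on its diagonal (all equal to $\tfrac{1}{K}$), every off-diagonal term drops out and I am left with $\entropy(\Ppseudo,\Pjoint) = -\tfrac{1}{K}\sum_{k=1}^{K}\log\Pjoint^{(k,k)}$. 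This scalar form is the object I would bound from above and below.

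For the lower bound, the key observation is that $\Ppseudo$ is itself a valid probability distribution over the $K^2$ index pairs: its entries are nonnegative and $\sum_{j,k}\Ppseudo^{(j,k)} = \sum_{k}\tfrac{1}{K} = 1$. Since $\Pjoint$ is also a distribution over the same $K^2$ pairs, Gibbs' inequality (equivalently $\KL(\Ppseudo \,\|\, \Pjoint)\geq 0$) gives $\entropy(\Ppseudo,\Pjoint) \geq \entropy(\Ppseudo)$, and the right-hand side evaluates directly to $-\sum_{j,k}\Ppseudo^{(j,k)}\log\Ppseudo^{(j,k)} = -\sum_k \tfrac{1}{K}\log\tfrac{1}{K} = \log K$. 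This simultaneously proves the bound and identifies $\log K$ as the minimum, attained exactly at the perfectly diagonal joint $\Pjoint = \tfrac{1}{K}\ident_K$ that the loss is designed to encourage.

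For the upper bound, I would substitute $\Pjoint^{(k,k)} = \tfrac{1}{N}\sum_{i=1}^{N}\pHat_i^{(k)}\pTilde_i^{(k)}$ into the scalar form and apply Jensen's inequality to the concave map $\log$, taken over the \emph{sample} average: for each fixed $k$, $\log\!\big(\tfrac{1}{N}\sum_i \pHat_i^{(k)}\pTilde_i^{(k)}\big) \geq \tfrac{1}{N}\sum_i \log\!\big(\pHat_i^{(k)}\pTilde_i^{(k)}\big)$. Multiplying by $-\tfrac{1}{K}$, summing over $k$, and splitting $\log(\pHat_i^{(k)}\pTilde_i^{(k)}) = \log\pHat_i^{(k)} + \log\pTilde_i^{(k)}$ then regroups the terms precisely into $\tfrac{1}{N}\sum_i\big(\entropy(\uu,\pHat_i)+\entropy(\uu,\pTilde_i)\big)$, since $\entropy(\uu,\pHat_i) = -\tfrac{1}{K}\sum_k\log\pHat_i^{(k)}$ and likewise for $\pTilde_i$.

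Neither bound involves heavy computation; the only thing to get right is matching each inequality to its correct averaging axis — Gibbs over the $K^2$ joint support for the lower bound, and Jensen over the $N$ samples for the upper bound — and observing that it is precisely the outer-product structure of $\Pjoint$ that lets the per-sample cross-entropies factor out. The single technical caveat I would flag is well-definedness of the logarithms: because $\proj$ ends in a $K$-way softmax, every $\pHat_i^{(k)}$ and $\pTilde_i^{(k)}$ is strictly positive, so each $\Pjoint^{(k,k)} > 0$ and all quantities above are finite.
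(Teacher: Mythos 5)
Your proof is correct and follows essentially the same route as the paper's: the lower bound via non-negativity of $\KL(\tfrac{1}{K}\ident_K \,\|\, \Pjoint)$ reducing the cross-entropy to the entropy $\log K$ of the diagonal pseudo-label, and the upper bound via Jensen's inequality applied to the logarithm of the sample average defining $\Pjoint^{(k,k)}$, followed by splitting $\log(\pHat_i^{(k)}\pTilde_i^{(k)})$ into the two per-sample cross-entropies. Your added remark on strict positivity of the softmax outputs ensuring finiteness is a sensible (if unstated in the paper) sanity check.
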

\begin{proof}
\vspace{-0.5em}
See Appendix \ref{sec:proof}.
\end{proof}

Our proposed MI loss can be thus expressed as
\begin{equation}
	\loss{MI} \, = \, -\MI'_{\alpha}(\Pjoint)\, = \,
	\entropy'_{\alpha}(\Pjoint) 
	- \entropy(\mHat) - \entropy(\mTilde)  
\end{equation}
As we will show in experiments, purely minimizing the cross-entropy between $\Ppseudo$ and $\Pjoint$ (i.e., using $\alpha=1$) does not give optimal results. This is because the true number of clusters is not known, and forcing an arbitrary number of clusters to be balanced is too restrictive. By using a value of $\alpha$ between 0 and 1, as shown in Figure \ref{fig:alpha_explain}, enables the network to ignore non-relevant clusters and focus on the most important ones. 

\subsection{Boundary-aware alignment loss for \emph{dense} feature clustering}

Clustering dense embeddings based on $\loss{MI}$ results in balanced and confident clusters, but these clusters do not need to be spatially regular or align with region boundaries in the image. 
To be useful for the downstream segmentation task, a good representation should capture anatomic structures in the images, whose contours often correspond to regions with strong intensity gradients (i.e., edges). Based on this idea, we propose to use local cross-correlation to match the boundaries of clusters, which correspond to regions with high entropy, with edges in the image. Our cross-correlation loss is defined as follows:
\begin{equation}
	\loss{CC} = \sum\limits_{i\in \Omega} 
	\frac{\bigg(\sum\limits_{j\in \mathcal{N}(i)} \big(\phi_j - \hat{\phi}(i)\big)\cdot \big(\varphi_j - \hat{\varphi}(i)\big)\bigg)\rule{0pt}{10pt}^2}
	{
		\bigg(\sum\limits_{j\in \mathcal{N}(i)}\!\! \big(\phi_j - \hat{\phi}(i)\big)^2\bigg) \!\cdot\! \bigg(\sum\limits_{j\in \mathcal{N}(i)}\!\! \big(\varphi_j - \hat{\varphi}(i)\big)^2\bigg)
	}
\end{equation}
In this loss, $\phi$ measures the edge response of a Sobel filter on the input image, while $\varphi$ is a spatial map of cluster distribution entropy. $\hat{\phi}(i)$ and  $\hat{\varphi}(i)$ denote the mean value in a local window $\mathcal{N}(i)$ centered on position $i$, respectively for $\phi$ and $\varphi$.  
We note that a similar loss is often used in medical image registration~\citep{balakrishnan2019voxelmorph}, where images of two different modalities or acquisitions need to be aligned. Unlike $L_2$ loss, which imposes a strict equivalence between distributions, this loss can capture correlation in local variance even when images have very different distributions of intensity.
 
\subsection{Contrastive loss for \emph{global} feature learning}

While the first two losses aim to regularize the local representation of dense feature maps in the decoder, the next one focuses on learning a global representation of the image. Toward this goal, we consider the features produced by the encoder $\enc(\img_i)$, which summarize the global context of an input image $\img_i$, and project it into a low-dimensional representation $\zz_i$. 
Similar to \citep{chaitanya2020contrastive}, we regularize global representation $\zz_i$ using a contrastive loss exploiting available meta-labels:
\vspace{-6pt}
\begin{equation}\label{equ:sup-con}
	\loss{con} \, = \, - \frac{1}{2N}\sum_{i=1}^{2N}\frac{1}{|\mathcal{S}_i|}\sum_{j\in \mathcal{S}_i} \log \frac{\exp\big(\tr{\zz}_i \zz_j/\tau\big)}{
		\sum\limits_{a \in \mathcal{S} \setminus \{i\}} \!\!\!\exp\big(\tr{\zz}_i \zz_a/\tau\big)}. 
\end{equation}
In the loss, $\mathcal{S} = \{i\, | 1\leq i \leq 2N\}$ is the index set of an augmented batch, where each image is randomly transformed twice. Moreover, $\mathcal{G}(i)$ the meta-label of image $i$ and $\mathcal{S}_i = \{j \, | \, \mathcal{G}(j) = \mathcal{G}(i), \, 1\leq j \leq 2N, \, i\neq j \}$ are the indexes of images within the same meta-label as $i$. As described in Section \ref{sec:comparison_methods}, we divide volumetric images into different partitions, and use the partition index of each 2D image (slice in the volume) as meta-label. $\tau$ is a small temperature factor that helps gradient descent optimization by smoothing the landscape of the loss objective.
\subsection{Our unified pre-training objective}
Our final objective for unsupervised representation learning combines all three objectives as follows:
\begin{equation}
	\loss{total} = \!\! \underbrace{\loss{con}}_{\text{global embedding}} \!\! + \, \underbrace{ \lambda \loss{CC} \, + \, \loss{MI} }_{\text{dense features}} 
	\label{equ:total}
\end{equation}
$\loss{con}$ is applied on the global representation of the encoder, and therefore it influences only the encoder. Conversely, $\loss{MI}$ and $\loss{CC}$ are used on the dense features of the penultimate layer of the decoder, hence they affect the parameters of the whole network. These three losses are learned jointly in a single pre-training step. 
See Fig.~\ref{fig:conceptual} in the Appendix for a graphical illustration.

\begin{table*}[t!]
	\centering
	\caption{3D DSC on test set when fine-tuned using a few labeled data. Listed methods are applied in a pre-training stage. (\emph{Dec}) means that the loss is applied to dense embeddings in feature maps of the decoder, and (\emph{Enc}) to the global features at the end of the encoder.
	}
\label{tab:representation_result}
	\begin{adjustbox}{max width=1\textwidth}
		\setlength{\tabcolsep}{2pt}
		\renewcommand{\arraystretch}{1.0}
		\begin{tabular}{p{0.23\textwidth}|cccc|cccc|cccc|cccc}
			\toprule
			\multirow[b]{2}{*}{\textbf{Methods}} 
			& \multicolumn{4}{c|}{\textbf{ACDC-LV}}
			& \multicolumn{4}{c|}{\textbf{ACDC-RV}}
			& \multicolumn{4}{c|}{\textbf{ACDC-Myo}}
			& \multicolumn{4}{c}{\textbf{P\textsc{romise}12}}\\
			\cmidrule(l{0pt}r{0pt}){2-5} 
			\cmidrule(l{0pt}r{0pt}){6-9} 
			\cmidrule(l{0pt}r{0pt}){10-13} 
			\cmidrule(l{0pt}r{0pt}){14-17} 
			 & \bfseries 1 scan & \bfseries 2 scans & \bfseries 4 scans & \bfseries mean & \bfseries 1 scan & \bfseries 2 scans & \bfseries 4 scans & \bfseries mean & \bfseries 1 scans & \bfseries 2 scans & \bfseries 4 scans & \bfseries mean & \bfseries 4 scans & \bfseries 6 scans & \bfseries 8 scans & \bfseries mean \\
			\midrule
			Baseline 
			& \res{67.13} {(2.27)}& \res{74.49} {(2.24}& \res{84.81} {(2.47)} & 75.48
			& \res{51.82} {(5.17)}& \res{60.50} {(3.68)}& \res{64.18} {(2.33)} & 58.84 
			& \res{54.05} {(0.62)}& \res{67.56} {(1.07)}& \res{76.00} {(0.13)} & 65.87
			& \res{49.91} {(1.69)} & \res{71.53} {(2.30)} & \res{78.04} {(1.82)} & 66.49
			\\
			Full Sup. &\multicolumn{4}{c|}{\res{92.26} {(0.37)}}& \multicolumn{4}{c|}{\res{86.80} {(0.09)}} & \multicolumn{4}{c|}{\res{88.07} {(0.28)}} 
			& \multicolumn{4}{c}{\res{89.65} {(0.96)}}\\
			\midrule 
			IIC (\emph{Dec})
			& \res{71.96} {(1.63)}& \res{82.84} {(0.99)}& \res{85.43} {(1.09)} & 80.08 
			& \res{56.92} {(4.00)}& \res{63.58} {(2.41)}& \res{64.93} {(3.50)} & 61.81
			& \res{58.31} {(3.73)}& \res{70.22} {(3.49)}& \res{74.98} {(1.74)} & 67.83 
			& \res{54.21} {(3.79)} & \res{72.97} {(1.41)} & \res{80.05} {(0.01)} & 69.07 \\
			IMSAT (\emph{Dec})
			& \res{57.59} {(9.18)}& \res{75.38} {(1.13)} & \res{76.76} {(6.18)}& 69.91 
			& \res{34.36} {(8.74)}& \res{47.81} {(3.41)} & \res{47.42} {(3.40)}& 43.20
			& \res{50.52} {(1.51)}& \res{64.51} {(3.36)} & \res{71.91} {(2.35)}& 62.23
			& \res{55.20} {(7.84)} & \res{74.46} {(2.48)} & \res{81.50} {(1.11)} & 70.38 
			\\
			
			\tabincell{l}{Contrast (\emph{Dec}) } 
			& \res{64.37} {(3.18)}& \res{77.69} {(0.98)}& \res{84.36} {(2.60)} & 75.47
			& \res{50.75} {(2.91)}& \res{56.34} {(2.57)}& \res{50.88} {(2.79)} & 52.66
			& \res{54.40} {(1.92)}& \res{70.11} {(0.88)}&\res{74.05} {(1.65)} & 69.19 
			& \res{54.22} {(2.44)} & \res{63.52}{(7.10)} & \res{82.47} {(1.03)} & 66.74 \\

			Ours (\emph{Dec}) (only MI) 
			& \res{83.63} {(0.89)}& \res{86.94} {(0.97)}& \underline{\res{89.33} {(0.52)}} & 86.63
			& \res{66.63} {(1.06)}& \res{73.78} {(1.90)}& \res{73.85} {(1.36)} & 71.42
			& \res{73.65} {(2.33)}& \res{77.39} {(2.47)}& \res{81.92} {(1.16)} & 77.08 
			& \res{65.36} {(2.40)} & \res{78.42} {(3.14)} & \res{81.92} {(2.89)} & 75.23 \\ 
			Ours (\emph{Dec}) (only CC) 
			& \res{64.85} {(6.14)}& \res{67.03} {(6.59)}& \res{79.31} {(7.19)} & 70.70
			& \res{44.30} {(4.65)}& \res{50.33} {(5.27)}& \res{54.52} {(3.46)} & 49.72
			& \res{49.46} {(1.04)}& \res{60.13} {(1.51)}& \res{69.64} {(4.20)} & 59.74
			& \res{42.48} {(8.72)} & \res{73.69} {(3.22)} & \res{80.31} {(2.04)} & 65.50 \\
			
			Ours (\emph{Dec}) (MI+CC) 
			& \underline{\res{84.04} {(1.07)}} & \textbf{\res{88.52} {(1.35)}} & \res{89.31} {(0.58)}& \underline{87.29}& 
			\underline{\textbf{\res{76.86} {(0.49)}}}& \underline{\res{79.13} {(0.94)}} & \res{75.92 }{(1.84)} & \underline{77.30} 
			& \textbf{\res{76.93} {(1.58)}} & \textbf{\res{79.59} {(0.85)}} & \underline{\res{81.97} {(0.59)}} & \textbf{79.49} 
			& \underline{\res{68.13}{(2.05)}} & \underline{\res{78.75} {(1.61)}} & \textbf{\res{82.82} {(3.25)}} & \underline{76.30}
			\\
			\midrule
						
			\tabincell{l}{Contrast (\emph{Enc}) } 
			& \res{80.59} {(0.86)}& \res{85.68} {(0.59)}& \res{87.78} {(0.20)} &84.10
			& \res{68.91} {(3.79)}& \res{73.54} {(0.18)}& \res{72.70} {(2.57)} & 71.72
			& \res{67.30} {(0.97)}& \res{77.22} {(1.31)} & \res{79.58} {(2.38)}& 74.70 
			& \res{63.54} {(2.08)} & \res{78.24} {(2.02)} & \res{81.72} {(0.90)} & 74.50\\
			\tabincell{l}{Contrast (\emph{Enc}+\emph{Dec}) } 
			& \res{77.98} {(2.90)} & \res{85.97} {(2.11)} & \res{88.42} {(1.20)} & 84.12
			& \res{66.47} {(1.21)}& \res{72.82} {(1.32)} & \underline{\res{76.69} {(3.23)}} & 71.99 
			& \res{64.96} {(0.61)} & \res{76.98} {(1.93)} & \res{78.76} {(0.68)} & 73.57 
			& \res{60.68} {(7.47)} & \res{77.97} {(4.70)} & \res{80.53} {(2.23)} & 73.06 \\
			\tabincell{l}{Contrast (\emph{Enc})+Ours (\emph{Dec})} 
			& \textbf{\res{84.48} {(1.32)}} & \underline{\res{87.85} {(0.45)}} & \textbf{\res{90.04 }{(1.31)}} & \textbf{87.45} 
			& \underline{\res{75.42} {(3.10)}} & \textbf{\res{79.73} {(1.66)}} & \textbf{\res{78.89} {(1.95)}} & \textbf{78.01}
			& \underline{\res{74.30} {(1.76)}} & \underline{\res{78.43} {(1.52)}} & \textbf{\res{82.82} {(0.97)}} & \underline{78.52} 
			& \textbf{\res{69.76} {(2.46)}} & \textbf{\res{80.47} {(0.76)}} & \underline{\res{82.09} {(2.47)}} & \textbf{77.44}
			\\
			\bottomrule
		\end{tabular}
	\end{adjustbox}
\end{table*}

\section{Experimental setup}
To assess the performance of our proposed pre-training method, we performed extensive experiments on two clinically-relevant segmentation datasets. In this section, we present briefly the experimental setting, employed dataset, as well as implementation details. We include more details in Appendix \ref{sec:append_dataset} and \ref{sec:implementation_details_append}. 
\subsection{Dataset and evaluation metrics}

Two clinically-relevant benchmark dataset are chosen for our experiments: the automatic cardiac diagnosis challenge (ACDC)~\citep{ACDC}, and the Prostate MR image segmentation 2012 challenge (P\textsc{romise}12)~\citep{litjens2014evaluation} dataset. Three foreground classes are delineated for ACDC dataset, which includes left ventricle endocardium (LV), left ventricle myocardium (Myo), right ventricle endocardium (RV), and we consider them as three binary segmentation tasks.
Due to the high anisotropic resolution in both datasets, we consider the 2D slices of volumetric images as separate examples, and randomly split them into training, validation and test sets, so that no two images of the same scan are in the same set.  
To evaluate methods in a setting with limited annotation, we randomly select images from a few scans of the training set as our labeled data set, and consider all the images of the training set as unlabeled. 
We detail the data pre-processing and augmentation in Appendix \ref{sec:append_dataset}. 
For all datasets, we used the 3D Dice similarity coefficient (DSC), which  
measures the overlap between the predicted labels $S$ and the corresponding ground truth labels $G$:
$\text{DSC}(S,G) = \frac{2\times|S\cap G |}{|S|+|G|}$.
In all experiments, we reconstruct the 3D segmentation for each scan by aggregating the predictions made for 2D slices and report the 3D DSC metric for the test set corresponding to the best-performing epoch on the validation set. 

\subsection{Comparable methods and ablation variants}
We compare our proposed method with clustering-based and contrastive-based self-supervised learning approaches, as well as six state-of-the-art semi-supervised segmentation methods. IMSAT~\citep{hu2017learning} and IIC~\citep{ji2019invariant} also employ MI maximization as the optimization criterion and cluster the local embeddings pixel-wisely. The contrastive learning method relies on the construction of positive and negative pairs. 
For dense embedding, positive pairs are embeddings in the same position undergoing different intensity transformations, where negative pairs are defined as embeddings with sufficient large distances. Six semi-supervised segmentation methods are also tested, and we present the details of each method in Appendix~\ref{sec:comparison_methods}.
Lastly, we boost the best performing semi-supervised method with pre-trained weights from our pre-training methods. 

\subsection{Implementation details}

We employ U-Net~\citep{ronneberger2015unet} as our segmentation network architecture, which consists of five symmetric encoder/decoder blocks with skip connections. We extract the local embeddings in the decoder layer before the last $1\!\times\!1$ convolution, thus they have the same spatial resolution as the input image. These embeddings are then projected to probabilities over $K$ cluster using a projector comprised of a $1\!\times\!1$ convolution and a $K$-way softmax. We fix $K\!=\!40$ for all datasets. Hyper-parameter
$\alpha$ is introduced in our method and we fixed it to 0.5 for all experiments. Image transformation $\mathcal{T}(\cdot)$ consists of gamma correction and random affine transformation. Our proposed method follows a two-stage training strategy: \emph{pre-train} for representation learning and \emph{fine-tune} for evaluation this representation on the downstream segmentation task. In the \emph{pre-train} stage, we optimize the network in an unsupervised way on all training images without pixel-wise annotation, resulting in a set of network parameters $\theta$. We evaluate the quality of these pre-trained weights in a separate \emph{fine-tune} stage by creating a second segmentation network initialized with these parameters, and fine-tuning the whole network using only a few labeled scans. 
The comparison with other SOTA semi-supervised methods is performed with the same setting as in the \emph{fine-tune} stage, and we report their test DSC performances on their own best hyper-parameters determined by validation performance using grid search.  We provide detailed explanation on network architecture, training protocols, transformation $\mathcal{T}(\cdot)$, and hyper-parameters used in each method in Appendix \ref{sec:implementation_details_append}.

\section {Experimental results}\label{sec:results}

In this section, we first compare our method against clustering-based and contrastive-based methods.
Then, we evaluate all components of our method as our ablation variants. Finally, we compare our method with the most promising approaches for semantic segmentation in medical imaging, with reduced training data.

\subsection{Comparison with cluster based methods and ablation variants}
\label{sec:results1}
Table \ref{tab:representation_result} reports the test 3D DSC performance for different representation learning methods on the ACDC and P\textsc{romise}12 datasets. 
At the top of the table, we report the number of labeled scans used for every result. Reported values are the average over \emph{three} independent runs with different random seeds. Methods presented here all adopt a \emph{pre-train} and \emph{fine-tune} strategy with a few annotated scans. 

\emph{Upper and lower bounds:} We present results for \emph{Baseline}, which uses only the annotated scans with cross-entropy as standard supervised loss, and for \emph{Full Supervision}, where the same loss is used with all available training examples. These represent lower and upper bounds on the expected performance for different methods.

\emph{Cluster-based methods:} We present in the next two rows the performance for IIC and IMSAT. These two methods employ MI as the optimization objective
and perform clustering on local embeddings with  $K$ clusters.
IIC brings consistent improvements across all four tested classes (4.6\%, 3.14\%, 1.96\%, and 2.58\%), while IMSAT leads to a worse performance for the ACDC dataset. We visualize their pre-trained clusters in Fig. \ref{fig:visual_cluster_inspection}, showing that these methods fail to find balanced clusters corresponding to meaningful regions of the image.

\emph{Contrastive-based method:} We then report in the next row the performance obtained using contrastive learning only on dense features of the decoder. Surprisingly, we observe that optimizing the contrastive objective with grid-based positive and negative pairs provides no benefit for the segmentation tasks. This is due to very weak guidance offered by contrasting dense embeddings. 

\emph{Our ablations:} We then present in the next three lines the performance for our proposed ablation variants. Our modified $\loss{MI}$ alone leads to substantial improvements compared to the original IIC: 6.54\%, 9.61\%, 9.25\%, and 6.16\% are observed for the four classes. These improvements clearly indicate the advantage of introducing a pseudo-mask $\Ppseudo$ to guide the learning of the joint probability matrix. $\loss{CC}$ aligns cluster boundaries with image edges, but does not help segmentation on its own since predicted clusters are not consistent across images and transformations. Last, we observe that combining our proposed $\loss{MI}$ and $\loss{CC}$ lead to significant improvements over using $\loss{MI}$ alone. These improvements are particularly notable for RV and Myo classes, which are more complex and rely more on image edges. 

\emph{Global feature pre-training:} The last three rows report the performance of methods employing contrastive learning on global features. The method Contrast (Enc) which only optimizes $\loss{con}$ significantly improves the segmentation quality given a few labeled scans. However, these improvements are still inferior to the Ours (Dec) (MI+CC) variant which, unlike Contrast (Enc), does not use meta-labels. Contrast (Enc+Dec), which combines \emph{global} and \emph{local} contrastive objectives, leads to marginal improvements. Our proposed method is complementary to the \emph{global} contrastive based method. We report in the last row the performance of our proposed method combining all three losses: $\loss{con}$, $\loss{MI}$ and $\loss{CC}$. This method achieves the highest accuracy on 10 out of 16 cases, and second rank for remaining cases. Further, it yields average DSC improvement over Baseline as large as 17.35\%, 23.60\%, 20.25\% and 17.85\%, for the LV, RV, Myo and Prostate tasks, respectively. 
\begin{table}[t]
	\centering
	\caption{Impact of $\alpha$ for our proposed $\loss{MI}$.}
	\label{tab:alpha_ablation}
	\begin{adjustbox}{max width=0.65\linewidth}
		\setlength{\tabcolsep}{3pt}
		\renewcommand{\arraystretch}{1}
		\begin{tabular}{c|ccc|ccc|ccc}
			\toprule
			\multirow[b]{2}{*}{\phz$\boldsymbol{\alpha}$\phz~} 
			& \multicolumn{3}{c|}{\textbf{ACDC-LV}}
			& \multicolumn{3}{c|}{\textbf{ACDC-RV}}
			& \multicolumn{3}{c}{\textbf{ACDC-Myo}}\\
			\cmidrule(l{0pt}r{0pt}){2-4} 
			\cmidrule(l{0pt}r{0pt}){5-7} 
			\cmidrule(l{0pt}r{0pt}){8-10} 
			 & \bfseries 1 scan & \bfseries 2 scans & \bfseries 4 scans & \bfseries 1 scan & \bfseries 2 scans & \bfseries 4 scans & \bfseries 1 scans & \bfseries 2 scans & \bfseries 4 scans \\
			\midrule
					
			$0.0\phz$ & \res{61.58 }{(2.36)} & \res{78.09 }{(4.95)} & \res{81.27 }{(5.96)} & \res{31.95 }{(5.23)} & \res{23.50 }{(3.25)} & \res{41.64 }{(7.96)} & \res{54.74 }{(4.72)} & \res{58.15 }{(8.67)} & \res{69.45 }{(3.38)} \\

			$0.25$ & \textbf{\res{84.36 }{(0.90)}} & \res{87.68 }{(0.59)} & \textbf{\res{89.32 }{(0.14)} } & \res{38.98 }{(23.06)} & \res{59.73 }{(7.74)} & \res{59.39 }{(4.83)} & \textbf{\res{76.93 }{(1.58)}} & \res{79.59 }{(0.85)} & \res{81.97 }{(0.59)} \\
			$0.5\phz$ & \res{84.04 }{(0.87)} & \textbf{\res{88.52 }{(1.11)}} & \res{89.31 }{(0.48)} & \textbf{\res{76.86 }{(0.35)}} & \res{79.13 }{(0.66)} & \textbf{\res{75.92 }{(1.30)}} & \res{76.27 }{(1.32)} & \textbf{\res{79.81 }{(0.86)}} & \textbf{\res{82.36 }{(0.92)}} \\
			$0.75$ & \res{82.02 }{(0.81)} & \res{87.81 }{(0.51)} & \res{89.03 }{(0.54)} & \res{76.76 }{(0.57)} & \textbf{\res{79.41 }{(0.59)}} & \res{75.49 }{(4.02)} & \res{73.14 }{(0.74)} & \res{78.79 }{(1.39)} & \res{81.79 }{(0.23)} \\
			$1.0\phz$ & \res{81.31 }{(0.70)} & \res{85.58 }{(0.70)} & \res{88.66 }{(0.51)} & \res{73.34 }{(1.65)} & \res{76.37 }{(1.14)} & \res{70.44 }{(1.56)} & \res{71.90 }{(3.72)} & \res{79.47 }{(1.09)} & \res{81.22 }{(0.90)} \\
			\bottomrule
		\end{tabular}
	\end{adjustbox}	 
\end{table}

\begin{table}[t]
	\centering
	\caption{Impact of our proposed boundary-aware loss $\loss{CC}$.}
	\label{tab:impact_boundary}
	\begin{adjustbox}{max width=0.65\linewidth}
		\setlength{\tabcolsep}{3pt}
		\renewcommand{\arraystretch}{1}
		\begin{tabular}{c|ccc|ccc|ccc}
			\toprule
			\multirow[b]{2}{*}{\phz$\boldsymbol{\lambda_{\mr{CC}}}$\phz} 
			& \multicolumn{3}{c|}{\textbf{ACDC-LV}}
			& \multicolumn{3}{c|}{\textbf{ACDC-RV}}
			& \multicolumn{3}{c}{\textbf{ACDC-Myo}}\\
			\cmidrule(l{0pt}r{0pt}){2-4} 
			\cmidrule(l{0pt}r{0pt}){5-7} 
			\cmidrule(l{0pt}r{0pt}){8-10} 
			 & \bfseries 1 scan & \bfseries 2 scans & \bfseries 4 scans & \bfseries 1 scan & \bfseries 2 scans & \bfseries 4 scans & \bfseries 1 scans & \bfseries 2 scans & \bfseries 4 scans \\
			\midrule
					
 $0.0$ & 83.63 & 86.94 & 89.33 & 66.63 &73.78 &73.85 &73.65 &77.39 &81.92 \\

			$0.1$ & 80.54 & 85.40 & 87.80 & 66.99 & 76.00 & 75.33 & 72.62 & 77.03 & 81.39 \\
			$1.0$ & \textbf{84.04} & \textbf{88.52} & \textbf{89.31} & \textbf{76.86} & \textbf{79.13} & \textbf{75.92} & \textbf{76.93} & \textbf{79.59}& \textbf{81.97} \\
			$4.0$ & 78.36 & 84.60 & 85.48 & 70.84 & 75.39 & 70.88 & 73.74 & 76.65 & 81.33 \\
			
			\bottomrule
		\end{tabular}
	\end{adjustbox}
\end{table}
\begin{figure}[b]
	\centering
	\vspace{0.8em}
	\begin{adjustbox}{max width=0.75\linewidth}
		\renewcommand{\arraystretch}{1}
		\setlength{\tabcolsep}{1pt}
		\begin{small}
			\begin{tabular}{ccccc}
				\includegraphics[width=0.2\linewidth]{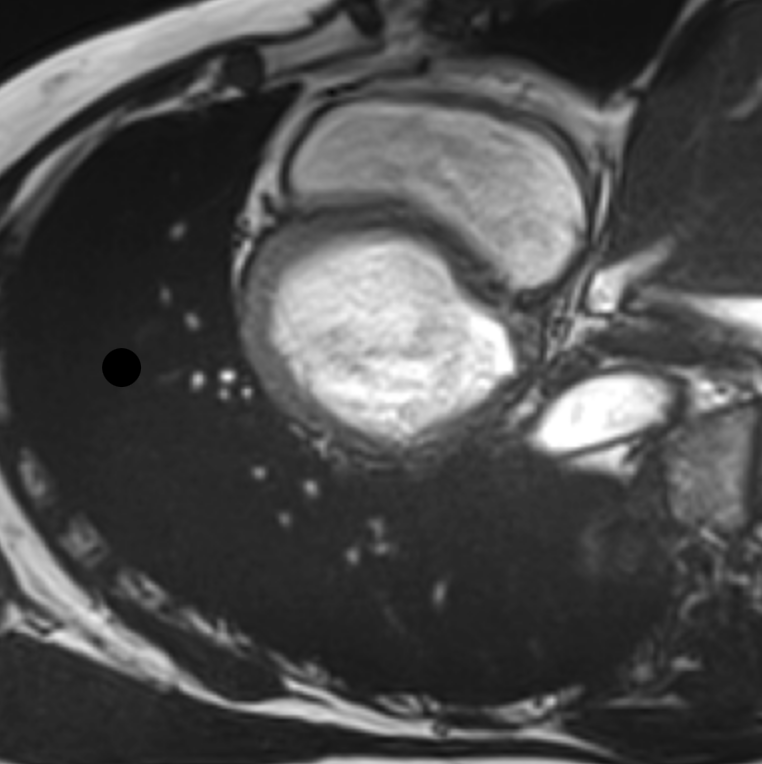} & 
				\includegraphics[width=0.2\linewidth]{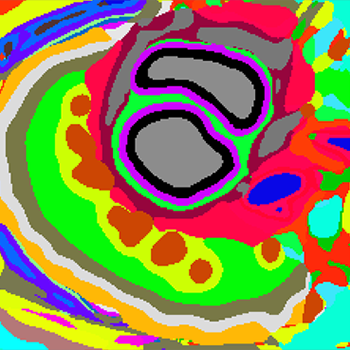} & 
				\includegraphics[width=0.2\linewidth]{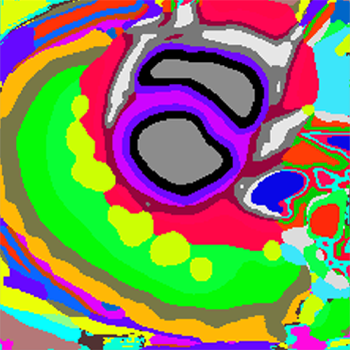} & 
				\includegraphics[width=0.2\linewidth]{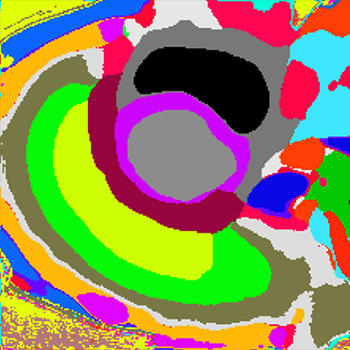} & 
				\includegraphics[width=0.2\linewidth]{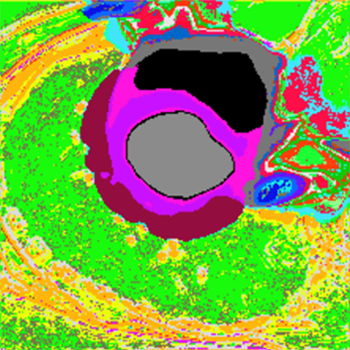} \\
							
				\includegraphics[width=0.2\linewidth]{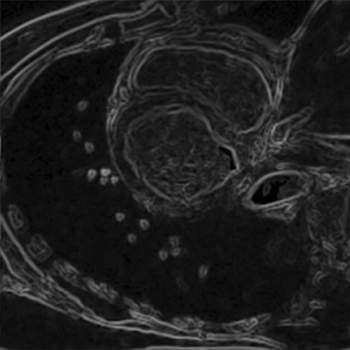} & 
				\includegraphics[width=0.2\linewidth]{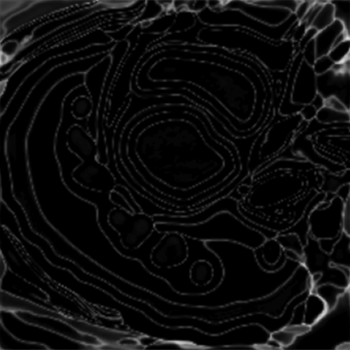} & 
				\includegraphics[width=0.2\linewidth]{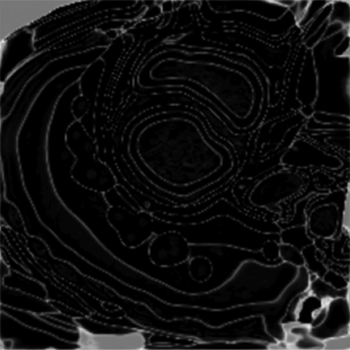} & 
				\includegraphics[width=0.2\linewidth]{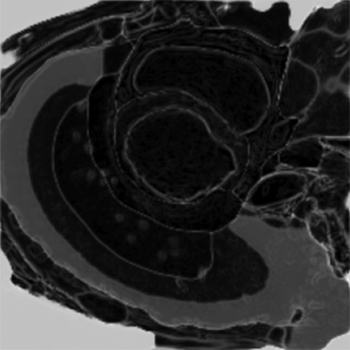} & 
				\includegraphics[width=0.2\linewidth]{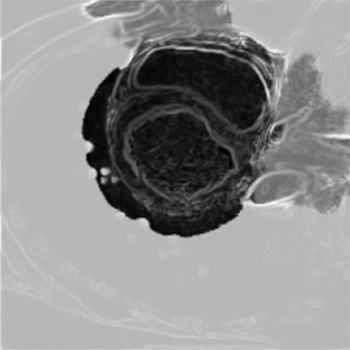} \\
				\tabincell{c}{Image / \\ 
				Gradient}  & \tabincell{c}{
				$\lambda_{\text{CC}}\!=\!0.0$} 
				  & \tabincell{c}{
				$\lambda_{\text{CC}}\!=\!0.1$} 
				  & \tabincell{c}{
				$\lambda_{\text{CC}}\!=\!1.0$} 
				  & \tabincell{c}{
				$\lambda_{\text{CC}}\!=\!4.0$} 
			\end{tabular}
		\end{small}
	\end{adjustbox}
	\caption{Boundary loss effect. Upper: Input image and different pre-trained clusters; Down: Image edges and entropy map for each cluster. Black color refers to certain cluster regions while bright regions reflect uncertain predictions.}
	\label{fig:boundary}
\end{figure}

\begin{table}[t]
	\centering
	\caption{Impact of number of clusters $K$.}
	\label{tab:impact_k}
	\begin{adjustbox}{max width=0.65\linewidth}
		\setlength{\tabcolsep}{3pt}
		\renewcommand{\arraystretch}{1}
		\begin{tabular}{c|ccc|ccc|ccc}
			\toprule
			\multirow[b]{2}{*}{\phz$\boldsymbol{K}$\phz} 
			& \multicolumn{3}{c|}{\textbf{ACDC-LV}}
			& \multicolumn{3}{c|}{\textbf{ACDC-RV}}
			& \multicolumn{3}{c}{\textbf{ACDC-Myo}}\\
			\cmidrule(l{0pt}r{0pt}){2-4} 
			\cmidrule(l{0pt}r{0pt}){5-7} 
			\cmidrule(l{0pt}r{0pt}){8-10} 
			 & \bfseries 1 scan & \bfseries 2 scans & \bfseries 4 scans & \bfseries 1 scan & \bfseries 2 scans & \bfseries 4 scans & \bfseries 1 scans & \bfseries 2 scans & \bfseries 4 scans \\
			\midrule
			\phz5 & 66.88 & 73.95 & 77.33 
			& 49.32 & 52.10 & 51.74 
			& 38.52 & 48.18 & 49.25 \\
			10 & 81.00 & 86.11 & 87.73 
			& 70.75 & 74.37 & 71.78 
			& 71.16 & 77.48 & 80.14 \\
			20 & 69.72 & 73.37 & 74.68 
			& 59.63 & 63.13 & 54.00 
			& 76.03 & 79.34 & 82.30 \\
			40 & 84.04 & \textbf{88.52} & 89.31 
			& \textbf{76.86} & \textbf{79.13} & 75.92 
			& \textbf{76.27} & \textbf{79.81} & 82.36 \\
			60 & \textbf{85.25} & 88.12 & \textbf{89.74} 
			& 70.28 & 74.78 & \textbf{76.39} 
			& 75.96 & 79.09 & \textbf{82.68} \\
			\bottomrule
		\end{tabular}
	\end{adjustbox}
	 
\end{table}

\subsection{Visualization of pre-trained cluster assignments}

To better understand our boundary-aware information maximization method, we visualize in Fig.~\ref{fig:visual_cluster_inspection} different cluster assignments obtained by our ablation variants and compared methods. Clusters obtained at the end of the unsupervised pre-training are illustrated by different colors. We also compare these clusters with the SLIC super-pixel algorithm \citep{achanta2012slic} that groups pixels based on both intensity and spatial information.
We note that IMSAT, IIC and Ours ($\alpha=0$) produce highly unbalanced clusters, where a few clusters dominate a large portion of pixels and resulting clusters do not correspond well to anatomical structures of the image. In contrast, our proposed variants with $\alpha\geq 0.25$ clearly capture the main structures in cardiac MR images, without any pixel-wise annotation. In most cases, it is able to successfully separate the LV, RV and Myo classes from the background. Additionally, P\textsc{romise}12 images present less contrast but our method still produces relatively better anatomical structures compared with traditional Super-pixel methods, IIC and IMSAT. 
This explains the huge improvements brought by our method when fine-tune the network using a few labeled images. Last, we notice that contrastive-based approaches can also boost the segmentation performance. However, they lack the intepretability of clusters provided by our method.
\begin{table*}[t!]
	\centering
	\caption{We compare the performance of our method with other pre-training approaches and state-of-the-art semi-supervised methods on 3D DSC on test set when fine-tuned using a few labeled data. }
	\label{tab:compare_sota}
	\begin{adjustbox}{max width=1\textwidth}
		\setlength{\tabcolsep}{2pt}
		\renewcommand{\arraystretch}{1.0}
		\begin{tabular}{p{0.23\textwidth}|cccc|cccc|cccc|cccc}
			\toprule
			\multirow[b]{2}{*}{\textbf{Methods}} 
			& \multicolumn{4}{c|}{\textbf{ACDC-LV}}
			& \multicolumn{4}{c|}{\textbf{ACDC-RV}}
			& \multicolumn{4}{c|}{\textbf{ACDC-Myo}}
			& \multicolumn{4}{c}{\textbf{P\textsc{romise}12}}\\
			\cmidrule(l{0pt}r{0pt}){2-5} 
			\cmidrule(l{0pt}r{0pt}){6-9} 
			\cmidrule(l{0pt}r{0pt}){10-13} 
			\cmidrule(l{0pt}r{0pt}){14-17} 
			& \bfseries 1 scan
			& \bfseries 2 scans
			& \bfseries 4 scans
			& \bfseries mean
					
			& \bfseries 1 scan
			& \bfseries 2 scans
			& \bfseries 4 scans
			& \bfseries mean
					
			& \bfseries 1 scans
			& \bfseries 2 scans
			& \bfseries 4 scans
			& \bfseries mean
					 
			& \bfseries 4 scans
			& \bfseries 6 scans
			& \bfseries 8 scans
			& \bfseries mean
			\\
			\midrule
			Baseline 
			 & \res{67.13} {(2.27)} & \res{74.49} {(2.24)} & \res{84.81} {(2.47)} & 75.48 
			 & \res{51.82} {(5.17)} & \res{60.50} {(3.68)} & \res{64.18} {(2.33)} & 58.84 
			 & \res{54.05} {(0.62)} & \res{67.56} {(1.07)} & \res{76.00} {(0.13)} & 65.87 
			 & \res{49.91} {(1.69)} & \res{71.53} {(2.30)} & \res{78.04} {(1.82)} & 66.49 \\
			\midrule
			\tabincell{l}{Contrast (\emph{Enc}+\emph{Dec}) } 
			& \res{77.98} {(2.90)} & \res{85.97} {(2.11)} & \res{88.42} {(1.20)} & 84.12
			& \res{66.47} {(1.21)}& \res{72.82} {(1.32)} & \res{76.69} {(3.23)} & 71.99 
			& \res{64.96} {(0.61)} & \res{76.98} {(1.93)} & \res{78.76} {(0.68)} & 73.57 
			& \res{60.68} {(7.47)} & \res{77.97} {(4.70)} & \res{80.53} {(2.23)} & 73.06 \\
						Ours (\emph{pre-train}) 			& \res{84.48} {(1.32)} & \res{87.85} {(0.45)} & \res{90.04 }{(1.31)}& 87.45 
			& \res{75.42} {(3.10)} & \res{79.73} {(1.66)} & \res{78.89} {(1.95)} & 78.01
			& \res{74.30} {(1.76)} & \res{78.43} {(1.52)} & \res{82.82} {(0.97)} & 78.52 
			& \res{69.76} {(2.46)} & \res{80.47} {(0.76)} & \res{82.09} {(2.47)} & 77.44\\
			\midrule
			\tabincell{l}{Entropy Min.}
			 & \res{73.79}{(2.79)} & \res{80.26}{(0.80)} & \res{86.84}{(0.87)}& 80.30 			 & \res{56.18}{(2.59)} & \res{62.09}{(3.32)} & \res{66.27}{(1.40)} & 61.51 
			 & \res{57.23}{(1.90)} & \res{71.10}{(1.02)} & \res{76.28}{(1.20)} & 68.20
			 & \res{59.78}{(0.76)} & \res{76.09}{(0.52)} & \res{78.98}{(2.58)} & 71.62 
			\\
			\tabincell{l}{MixUp} 
			 & \res{73.30}{(2.31)} & \res{76.30}{(3.14)} & \res{84.42}{(0.29)} & 78.01
			 & \res{61.23}{(1.88)} & \res{63.60}{(1.09)} & \res{63.14}{(0.59)} & 62.66
			 & \res{55.74}{(2.90)} & \res{69.80}{(1.70)} & \res{73.84}{(1.20)} &66.46
			 & \res{52.09}{(7.62)} & \res{75.59}{(1.08)} & \res{81.11}{(2.59)} & 69.60
			\\
			\tabincell{l}{Mean Teacher (MT)}
			 & \res{83.13}{(4.52)} & \res{87.02}{(1.27)} & \res{87.70}{(0.74)} & 85.95
			 & \res{61.61}{(4.20)} & \res{68.76}{(1.57)} & \res{67.21}{(2.44)} & 65.86
			 & \res{61.55}{(2.16)} & \res{75.32}{(5.37)} & \res{78.42}{(1.13)} & 71.76
			 & \underline{\res{84.71}{(0.51)}} & \underline{\res{85.97}{(0.45)}} & \underline{\res{86.93}{(0.22)}} & \underline{85.87}
			\\
			\tabincell{l}{UA-MT} 
			 & \res{81.08}{(1.28)} & \res{85.03}{(0.35)} & \res{87.19}{(1.91)} & 84.43
			 & \res{62.06}{(1.25)} & \res{67.91}{(3.87)} & \res{66.64}{(1.39)} & 65.54
			 & \res{59.26}{(0.85)} & \res{73.68}{(1.54)} & \res{78.61}{(1.31)} & 70.52
			 & \res{66.16}{(7.73)} & \res{81.79}{(0.93)} & \res{84.40}{(1.24)} & 77.45
			\\

			\tabincell{l}{ICT} 
			 & \res{76.87}{(1.18)} & \res{78.41}{(3.68)} & \res{86.34}{(0.38)} & 80.54
			 & \res{60.31}{(2.57)} & \res{63.42}{(1.86)} & \res{68.35}{(0.70)} & 64.03
			 & \res{55.91}{(1.02)} & \res{71.77}{(0.62)} & \res{77.90}{(2.07)} & 68.53
			 & \res{63.97}{(2.07)} & \res{77.92}{(2.04)} & \res{81.39}{(0.15)} & 74.43 \\
 		\tabincell{l}{Adv. Train.} 
			 & \res{75.31}{(2.29)} & \res{74.85}{(2.03)} & \res{85.85}{(0.35)} & 78.67
			 & \res{55.29}{(4.14)} & \res{62.25}{(1.19)} & \res{64.58}{(1.97)} & 60.71
			 & \res{57.68}{(1.88)} & \res{70.39}{(0.27)} & \res{75.94}{(1.32)} & 68.00
			 & \res{71.50}{(2.90)} & \res{78.63}{(1.08)} & \res{81.35}{(0.84)} & 77.16
			\\
					

			\midrule
			{MT + Contrast (\emph{Enc}+\emph{Dec})} 
			& \underline{\res{86.37} {(0.83)}} & \underline{\res{89.57} {(0.40)}} & \underline{\res{90.40} {(0.15)}} & \underline{88.78}
			& \underline{\res{75.53}{2.62}} & \underline{\res{78.42}{0.40}} & \underline{\res{77.22}{0.74}} & \underline{77.06}
			& \underline{\res{76.11}{1.72}} & \underline{\res{80.21}{0.85}} & \underline{\res{82.00}{1.19}} & \underline{79.44} 
			& 76.16 & 82.89 & 84.85 & 81.30\\
			
			{MT + Ours (\emph{pre-train})} & \textbf{\res{90.25} {(0.38)}} & \textbf{\res{91.36} {(0.18)}} & \textbf{\res{91.04} {(1.16)}} & \textbf{90.88}
			& \textbf{\res{80.16}{0.63}} & \textbf{\res{81.50}{0.66}} & \textbf{\res{78.97}{1.38}} & \textbf{80.21}
			& \textbf{\res{78.71}{1.78}} & \textbf{\res{83.33}{1.11}} & \textbf{\res{83.61}{0.96}} & \textbf{81.88} 
			& \textbf{85.64} & \textbf{85.60} & \textbf{88.45} & \textbf{86.56}\\
			\bottomrule
					 
		\end{tabular}
	\end{adjustbox}
\end{table*}
\subsection{Impact of $\alpha$ in our proposed $\loss{MI}$ objective}

To evaluate the impact of the proposed pseudo-label for the joint distribution matrix, we vary different $\alpha$ based on one of our best performing case and report the results in Table \ref{tab:alpha_ablation}. It can be seen that increasing $\alpha$ from 0 to 0.25 introduces large improvements for all segmentation tasks, which confirms the poor guidance of the IIC objective. Interestingly, we notice that $\alpha=1$ does not lead to the best performance. This might be because clustering pixels into $K=40$ regions of similar sizes breaks the anatomical structures of a given image, and a relatively lower $\alpha$ provides a softer guidance that helps preserve these structures. We confirm this by visualizing in Fig. \ref{fig:alpha_explain} the joint matrix $\Pjoint$, the cluster assignment, as well as the uncertainty of these clusters for different $\alpha$.

\subsection{Impact of boundary-aware loss $\loss{CC}$}
Our boundary-aware loss $\loss{CC}$ is a key component to boost performance for harder segmentation tasks such as RV and Myo. To determine the usefulness of this loss, similar to the previous experiment, we vary $\lambda_{\mr{CC}}$ ranging from 0.0 to 4.0 for our best performing case, and present the results in Table \ref{tab:impact_boundary}. Clearly, increasing $\lambda_{\mr{CC}}$ from 0.0 to 1.0 improves the segmentation performance for all tasks, in particular for the RV and Myo classes whose boundary mainly follows the image edges. In Fig. \ref{fig:boundary}, we show the cluster boundaries separate images obtained with the different $\lambda_{\mr{CC}}$. The boundary-aware loss successfully guides the cluster boundaries towards image edges and reduces the over-segmentation of pixels around the boundaries. 


\subsection{Impact of over-segmented clustering numbers $K$}
Our MI-based method converts continuous feature vectors to a discrete distribution over clusters. The cluster number $K$ is another important hyper-parameter for our method. In this ablation experiment, we measure the impact of $K$ by varying it from 5 to 60. Table \ref{tab:impact_k} and Fig. \ref{fig:impact_k} in Appendix \ref{sec:clusters} show the DSC performance and the corresponding cluster assignment obtained with unsupervised pre-training.  $K=5$ leads to a weak segmentation performance, which can be explained by a collapsed cluster assignment. The cluster maps become more balanced with the increase of $K$ and gradually reflect the cardiac structures in the image. However, using $K\!=\!60$ does not give a better performance in segmentation tasks, since the resulting clusters over-segment the image and capture less relevant regions.  

\subsection{Comparison with state-of-the art methods}

We compare our method with other approaches that aim to improve training with few annotated images/scans. Table \ref{tab:compare_sota} presents results for various semi-supervised learning approaches. For a more detailed explanation of the experimental setup of each method, see Appendix \ref{sec:comparison_methods}. 
To have a fair comparison, for all methods, we used grid search on the validation set to tune the hyper-parameters and report the corresponding test performance. For most methods, the improvement with respect to the baseline trained with only the supervised loss is quite limited and varies depending on the segmentation task and the number of annotated scans used. 
Among different methods, MT offers stable improvements across all tasks and reaches competitive performance compared with contrast (\emph{Dec+Dec}) and Ours (\emph{pre-train}) for the LV and Prostate classes, mainly due to its temporally-ensembled teacher network which provides stable prediction proposals for unlabeled images. However, semi-supervised methods such as MT are normally trained with randomly initialized parameters and can thus be further improved with our proposed \emph{pre-train} approach as initialization of the network. To test this idea, we ran MT with two different initializations, one from contrastive-based Contrast (\emph{Enc+Dec}) and the other from Ours (\emph{pre-train}). The results in the last two rows of Table \ref{tab:compare_sota} indicate that a further improved segmentation is obtained by simply initializing the network parameters with these pre-trained checkpoints: Contrast (\emph{Enc+Dec}) boosted the performance of MT by 2.82\%, 11.20\% and 7.68\% for LV, RV, and Myo, while these improvements increase to 4.93\%, 14.35\% and 10.12\% when Ours (\emph{pre-train}) is used as initialization. In summary, our boundary-aware algorithm for unsupervised representation learning can boost state-of-the-art semi-supervised segmentation approaches to achieve excellent segmentation quality, even when only an extremely small amount of labeled examples are available.

\section{Discussion and conclusion}
In this paper, we presented a boundary-aware information maximization method for the unsupervised pre-training of models for medical image segmentation. This method complements the \emph{global} contrastive loss and can highly improve the performance of a segmentation network when annotated data is scarce. It was shown that, with a reduced amount of unlabeled images, our method can learn a useful local representation on dense feature maps during pre-training, without any supervisory signal. Furthermore, as shown in our visualization of results, the clusters obtained by the pre-trained checkpoint enhance intepretability. We compared our method with recent self-supervised learning approaches, based on clustering and contrastive learning, as well as six strong semi-supervised segmentation algorithms. Results on two benchmark datasets demonstrate the outstanding accuracy of our method. In particular, the combination of our method with Mean Teacher yields unprecedented performance, reaching close to full supervision with a single scan. 

\nocite{langley00}

\bibliography{bibo}
\bibliographystyle{icml2022}

\newpage
\appendix
\onecolumn

\section{Diagram}

\begin{figure}[H]
 \centering
 \includegraphics[width=1\linewidth]{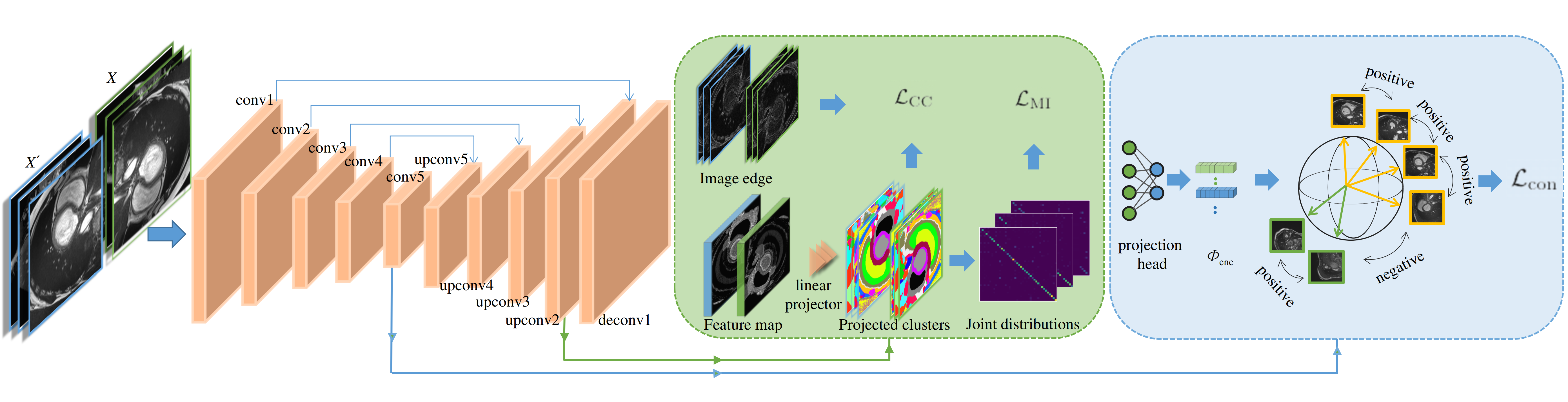}
 \caption{Schematic Diagram of our proposed method. Our method consists of three individual objectives. \emph{Global} contrastive loss $\loss{con}$ enforces images with similar anatomical structures to be pull close, providing global context information for the encoder. Our MI-based loss $\loss{MI}$ aims to cluster the dense local embeddings into $K$ balanced and confident classes, while our boundary-aware loss $\loss{CC}$ aligns the boundaries of these clusters to image edges.}
 \label{fig:conceptual}
\end{figure}

\section{Related work}

Inspired by the recent success of representation learning, various approaches based on pre-training  have been investigated for segmentation. These approaches
seek to acquire a discriminative image representation from unlabeled data, in an independent pre-train stage.
In medical image segmentation, 
\citet{chen2019self, bai2019self} proposed to predict the relative position of patches in MR images. \citet{taleb2021multimodal} extended the jigsaw puzzle solving pretext task to images from multiple MRI modalities. \citet{zhou2021models} proposed Models Genesis, a denoising auto-encoder that reconstructs an MR image given its degraded version as input. Contrastive learning has also shown promising results to boost the performance of downstream tasks using unlabeled images. In this approach, a network is pre-trained to bring closer the feature embeddings of an image under different transformations (positive pairs), while pushing away those from different images (negative pairs). This idea was used to learn a global representation at the end of the network's encoder, using meta-labels on anatomical similarity or subject ID to define the positive pairs \citep{chaitanya2020contrastive, peng2021self, zeng2021positional}. To also pre-train the decoder, the method in \citep{chaitanya2020contrastive} defined positive or negative embedding pairs based on their spatial distance in a feature map, those with a large distance considered as negative while those at the same spatial position but coming from different transformations as positives. 
\citet{hu2021semi} proposed using small set of pixel-wise annotations to guide the learning of dense features in pre-training. The feature embeddings of pixels with the same label are considered as positive pairs and are then clustered together by the contrastive loss. While this guided approach helps learn a better local representation, it requires manual annotations and therefore it is not unsupervised. 
\citet{ouyang2020self} instead employed superpixels for the contrastive objective, however their approach is defined in the context of few-shot segmentation. 

Clustering has also been used to pre-train a network with unlabeled images~\citep{caron2020unsupervised, ji2019invariant, cho2021picie,fang2021self}. Surprisingly, only a few papers have explored this self-supervised learning approach for medical image segmentation \citep{peng2021boosting, ahn2021spatial}. Our method extends the IIC deep clustering approach  ~\citep{ji2019invariant} with an improved loss that encourages clusters to be consistent across different transformations and follow the region boundaries in the image. 

Accurately predicting the boundaries of anatomical structures, tissues or lesions is essential for medical image segmentation, and boundary-aware training methods have been widely explored. To achieve this goal, most methods use a multi-task learning strategy with a secondary loss function focusing on boundary information \citep{li2020shape, xue2020shape, shen2017boundary}. Another approach adopts a discriminator to embed the ground-truth boundary \citep{wei2020attentive}. Unlike our boundary-aware information maximization method for unsupervised representation learning, these approaches require annotated images and thus are limited to supervised or semi-supervised settings. 


\section{Proof of Proposition \ref{the-proposition}}\label{sec:proof}
\begin{proof}
For the lower bound, we use the inequality $\entropy(P,Q) = \KL(P \,\|\, Q) + \entropy(P)$ to obtain
\begin{align}
\entropy(\tfrac{1}{K}\ident_K,\Pjoint) & \, = \, \underbrace{\KL(\tfrac{1}{K}\ident_K \,\|\, \Pjoint)}_{\geq 0} \, + \, \entropy(\tfrac{1}{K}\ident_K)\\
& \, \geq \, -\sum_{j=1}^K\sum_{k=1}^K \tfrac{1}{K}\vone[j=k] \log \big(\tfrac{1}{K}\vone[j=k]\big) \\
& \, = \, -\sum_{k=1}^K \tfrac{1}{K}\log \tfrac{1}{K} \, = \, \log K.
\end{align}
For the upper-bound, we use the Jensen inequality to get
\begin{align}
\entropy(\tfrac{1}{K}\ident_K,\Pjoint) & \, = \, -
\sum_{j=1}^K\sum_{k=1}^K \tfrac{1}{K}\vone[j=k] \log \Pjoint^{(j,k)} \\
& \, = \, 
- \sum_{k=1}^K \tfrac{1}{K} \log \bigg(
\frac{1}{N}\sum_{i=1}^N \widehat{p}_{ik} \, \widetilde{p}_{ik} \bigg) \\
& \, \leq \, 
-\frac{1}{N}\sum_{i=1}^N \sum_{k=1}^K 
\tfrac{1}{K} \big(\log \widehat{p}_{ik} + \log \widetilde{p}_{ik}\big) \\
& \, = \, \frac{1}{N}\sum_{i=1}^N \entropy(\uu, \pHat_i) + \ \entropy(\uu, \pTilde_i) \ \quad\square
\end{align}
\end{proof}

\section{Datasets}
\label{sec:append_dataset}

We assess the performance of our proposed method and compare it with other SOTA approaches on two clinically-relevant datasets: the automatic cardiac diagnosis challenge (ACDC) and the Prostate MR image segmentation 2012 challenge (P\textsc{romise}12). These two datasets cover different anatomical structures, present different acquisition resolutions, and are widely used to verify the effectiveness of semi-supervised segmentation algorithms. 

\textbf{ACDC dataset:
} The ACDC dataset\footnote{Publicly-available by \url{https://www.creatis.insa-lyon.fr/Challenge/acdc/index.html}} consists of 200 short-axis cine-MRI scans from 100 patients, evenly distributed in 5 subgroups: normal, myocardial infarction, dilated cardiomyopathy, hypertrophic cardiomyopathy, and abnormal right ventricles. For each patient, two annotated scans correspond to end-diastolic (ED) and end-systolic (ES) phases are provided, which were acquired on 1.5T
and 3T systems with resolutions ranging from $0.70\ \times\ 0.70$ mm to $1.92\ \times\ 1.92$ mm in-plane and 5 mm to 10 mm through-plane. Three regions of interest: left ventricle endocardium (LV), left ventricle myocardium (Myo), right ventricle endocardium (RV) are labeled from background and delineated pixel-wisely by human experts. 
We consider the 3D-MRI scans as 2D images through-plane due to
the high anisotropic acquisition resolution, and re-sample them to a fix space ranging of $1.0\ \times\ 1.0$ mm. Following \citep{peng2021self}, we normalize the pixel intensities based on the 1\% and 99\% percentile of the
intensity histogram for each scan. Normalized slices are then cropped to $384\ \times\ 384$ pixels, coarsely centered based on the foreground delineation of the ground truth. 
We select slices from 175 random scans as our training set, from which we again randomly select 1, 2 or 4 scans\footnote{These labeled splits were also kept untouched for experiments employing \emph{pre-train} and \emph{fine-tune} strategies, as well as those relying on semi-supervised losses.} as our labeled data, representing representing 0.5\% to 2\% of all available data, and consider others as unlabeled data. We then randomly divide the remaining 25 scans into a validation set and a test set, comprised of 8 scans and 17 scans respectively. Both the validation and test sets were set aside during model optimization. For experiments with \emph{pre-train} and \emph{fine-tune} strategies, we use all \emph{training} data without any pixel-wise annotation for \emph{pre-train} and evaluate the representation ability by fine-tuning the obtained network on a few labeled scans via a cross-entropy loss. We also used various data augmentation transformations $\mathcal{T}(\cdot)$ for both labeled and unlabeled images, including random crops of 224 × 224 pixels, random flip, random rotation, and color jitter from the \texttt{Pillow} library. It is also worthy to notice that the main experimental results from this dataset were obtained by considering three-class segmentation as three binary segmentation tasks. 

\textbf{P\textsc{romise}12 dataset:} Our second dataset\footnote{Publicly-available by \url{https://promise12.grand-challenge.org/}} focuses on prostate segmentation and is composed of multi-centric transversal T2-weighted MR images from 50 subjects. 
These images were acquired from different vendors and with various acquisition protocols, and are thus representative of typical MR images acquired in a clinical setting. Image resolution ranges from 15 × 256 × 256 voxels to $54 \ \times\ 512\ \times\ 512$ voxels with a spacing ranging from $2\ \times\ 0.27\ \times\ 0.27$ mm to $4\ \times \ 0.75\ \times \ 0.75$ mm. Also in this case, we slice these volumetric images into 2D images along the short-axis and resized them to a resolution of $256\ \times\ 256$ pixels. We equally performed a normalization on pixel intensities based on 1\% and 99\% percentile for each scan. We randomly selected 40 scans as
training data, 3 scans for validation, and 7 scans for testing. To test methods in annotation-scarcity regime, we chose 4, 6, and 8 scans from these training examples as the labeled images, while keeping others as unlabeled. We also employed rich data transformation prior to $\mathcal{T}(\cdot)$. These \texttt{Pillow}-based transformations include random crop of $224\ \times\ 224$ pixels, random flip, random rotation within a range of [$-10^{\circ}$, $10^{\circ}$], and color jitter. 

\section{Detailed comparison methods}
\label{sec:comparison_methods}
We implement various state-of-the-art methods for comparison, including: 

 \textbf{Contrastive-based method \citep{chaitanya2020contrastive}:} As our closest work, this approach acquires discriminative representation from unlabeled images via two \emph{global} and \emph{local} contrastive learning. 
 \emph{Global} contrastive learning pre-trains the encoder of the segmentation network to distinguish the global context information such as the anatomical similarities between two slices, 
 while \emph{local} contrastive learning focuses on dense embeddings and enforces pixels undergoing different transformations to be close and pixels at different spatial locations to be pushed away. 
 We refer \emph{Contrast (Enc)} as our \texttt{PyTorch} re-implementation of the variant using only \emph{global} contrastive objective, \emph{Contrast (Dec)} as the variant employing only \emph{local} contrastive learning, and \emph{Contrast (Enc+Dec)} as the full implementation using \emph{both} contrastive objectives. For \emph{Contrast (Enc)} variants, we pre-train the encoder of the segmentation network to distinguish whether two slices comes from similar slice position by assuming the volumetric scans are coarsely aligned. Towards this goal, we manually split an ACDC scans into three partitions, while we fixed the partition numbers for P\textsc{romise}12 as 5, similar to \citet{peng2021self}. A nonlinear projector is used to convert the global representation to representation vector, comprised of an average pooling layer, 2-layer MLP with LeakyReLU as the activation, and a normalization layer. For the variants using \emph{local} contrastive learning, we take the dense embeddings from the layer before last $1\times1$ convolutions. These dense embeddings are then projected to pxiel-wise vectors by a dense projector, consisting of an adaptive average pooling of size $20\times 20$ to reduce the spatial size, 2-layer MLP with $1\times1$ convolutions with LeakyReLU as the activation, and a normalization layer. Positive and negative pairs are then defined on these $20\times20$ grid, similar to \citet{chaitanya2020contrastive}. 
 It is worthy to point out that we only optimize the parameters for the encoder in the \emph{pre-train} stage only when \emph{global} contrastive loss is used, while the whole network except last $1\times1$ convolution is optimized when \emph{local} contrastive loss is employed.
 We employ two stage strategy: \emph{pre-train} and \emph{fine-tune} to evaluate the quality of the learned representation. 
 
 \textbf{IIC (Dec) \citep{ji2019invariant}:} This is the original method proposed in \citep{ji2019invariant} for unsupervised image clustering, which corresponds to our optimization objective $\loss{MI}$ with $\alpha=0.0$. This method has being successfully used in image clustering, as well as for unsupervised natural image segmentation, but it can only work well with coarse classes. We follow the exact protocol and hyper-parameters as our proposed method and report the 3D DSC score on the test set. 
 
 \textbf{IMSAT (Dec) \citep{hu2017learning}:} This method seeks to maximize MI over categorical distributions from dense embeddings in a similar but different formulation: $\MI = \MI(X, p(X))$, where $X$ is the image set while $p(X)$ is the cluster assignment distribution given $X$. This objective has been successfully applied in image clustering \citep{hu2017learning}. We adapt this loss for dense embedding clustering and this method shares the same experimental protocols as our proposed method. We equally evaluate its performance using \emph{pre-train} and \emph{fine-tune} strategy. 
 
 \textbf{Entropy Minimization (EM)~\citep{vu2019advent}: } This method has been successfully applied in semi-supervised classification and segmentation with domain gap, and imposes a low conditional entropy on unlabeled images:
$ \loss{ent} = -\frac{1}{|\data_u||\Omega|} \sum_{x\in \data_u} \sum_{i\in \Omega} p_i(x)\log(p_i(x))$.
 By increasing its confidence for unlabeled images, the network pushes the decision boundary away
 from dense regions of the input space, therefore improving generalization. For this method, we
 performed a hyper-parameter search on the coefficient balancing the cross-entropy and $\loss{ent}$,
 from $1 \times 10^{-4}$ to 1.0. We evaluate this method in a standard semi-supervised setting with randomly initialized network parameters.

 \textbf{MixUp~\citep{zhang2017mixup}:} We also evaluated the effectiveness of mixup, an effective data argumentation strategy
on medical image segmentation, following \citet{chaitanya2020contrastive}. In this method, we interpolate two labeled images with an index sampled from $Beta$($\alpha, \alpha$) distribution and enforce the network to output the prediction as the interpolation of the two annotations. We fix $\alpha=1$ and the coefficient weighting the mixup loss is selected by grid search from $1\times10^{-5}$ to 0.1.

 \textbf{Mean Teacher (MT) \citep{perone2018deep}:} 
 This semi-supervised segmentation method adopts a teacher-student framework, in which two networks sharing the same architecture learn from each other. Given an unlabeled image x, the student model $p^{s}(\cdot)$ seeks to minimize the prediction difference with the teacher network $p^{t}(\cdot)$, whose weights are a temporal exponential moving average (EMA) of the student’s: 
 $\loss{MT} = -\frac{1}{|\data_u||\Omega|} \sum_{x\in \data_u} \sum_{i\in \Omega} |p^{s}_i(x)-p^{t}_i(x)|^2 $.
 We fix the decay coefficient to 0.99. The coefficient balancing the supervised and regularization losses is selected by grid search, from $1 \times 10^{-4}$ to 10.
 
 \textbf{Uncertainty-aware Mean Teacher (UA-MT)~\citep{yu2019uncertainty}:} This semi-supervised approach introduces uncertainty for teacher network, which is achieved by Monte-Carlo dropout through multiple inferences. In our implementation, we forward through the teacher network unlabeled images four times and the uncertainty is obtained by computing the pixel-wise entropy of these predictions. We then use a linearly increased threshold $T$, ranging from $\frac{3}{4}\times \log(K)$ to $\log(K)$ to exclude from $\loss{MT}$ pixels having high uncertainty. We keep other settings the same as our Mean Teacher method and evaluate method's performance in a standard semi-supervised setting.
 
 \textbf{Interpolation Consistency Training (ICT) \citep{verma2019interpolation}:} The next method we tested applies mixup method with teacher-student framework. In this approach, interpolated images are obtained by mixing up two unlabeled images. The student network is encouraged to output the prediction as the interpolation of their predictions given by the teacher network. We follow \citep{verma2019interpolation} to set $\alpha$ as 0.1 and again grid search the weighting coefficient for the regularization objective, from $1\times10^{-5}$ to 0.1.

 \textbf{Adversarial training(AT)~\citep{zhang2017deep}:} Our last method trains a segmentation network and a classifier-based discriminator jointly in a min-max game. The core idea is to enforce
the segmentation predictions on unlabeled images being indistinguishable from those of labeled
images, thus aligning the output distributions between labeled and unseen images. This method works particularly well in a scenario where the image scans present large variability causing a domain gap. We evaluate this method in a standard semi-supervised setting and grid-search the regularization coefficient, from $1\times10 ^{-6}$ to 0.1.

\section{Implementation details}
\label{sec:implementation_details_append}
\textbf{Network Architecture:} We used U-Net \citep{ronneberger2015unet} as our main network architecture, which consists of five symmetric blocks of encoder and decoder. As shown in Fig. \ref{fig:conceptual}, we assign different names to these blocks and our global embeddings and dense embeddings are taken from \texttt{conv5} and \texttt{upconv2}. A first nonlinear projector is used to convert the global representation to representation vector, comprised of an average pooling layer, 2 MLP layers with LeakyReLU as the activation, followed by a normalization layer. In contrast, we simply employ a linear projector, including $1\times1$ convolution followed by a $K$-way softmax for the dense embeddings. Learnable parameters are optimized using stochastic gradient descent (SGD) with a RAdam Optimizer~\citep{liu2019variance}. 

\textbf{Training hyper-parameters:} Our main experiments adopt the two-stage training strategy: pre-training the whole network on all training data without labels and fine-tune it with a few labeled scan. 
For both stages, we employed a learning rate decay strategy, where the initial learning rate $lr$ is increased $N$ times in the first 10 epochs, followed by a cosine decay strategy for the rest $N_{\text{epoch}}$ training epochs. We set $lr = 5\times 10^{-7}$, $N=400$, and $N_{\text{epoch}}=50$ for the ACDC in pre-train stage, $lr=1\times10^{-7}$, $N=200$, and $N_{\text{epoch}}=50$ for ACDC in fine-tune stage. As for the P\textsc{romise}12 dataset, we simply modify $lr$ to $1\times 10^{-6}$ for the fine-tune stage. We define an epoch in our experiments as the $N$ update iterations, within which images are randomly selected from their respective dataset with replacement. For ACDC, we fixed $N$ as 200 iterations while for P\textsc{romise}12, we increase $N$ to 400. For concurrent methods employing semi-supervised setting, we follow exactly the same configuration as adopted in fine-tune stage. 
As shown in Equ. (\ref{equ:total}), our method requires only one weighting coefficient which balances the importance of our $\loss{CC}$ and we simply set it to 1.0 for both datasets.

\textbf{Details on the transformation $\mathcal{T}(\cdot)$:} Our proposed method heavily relies on $\mathcal{T}(\cdot)$ to create transformation equivalent pairs of cluster distribution: $\pHat = \proj(s(\tsf(\img)))$ and $\pTilde = \proj(\tsf(s(\img)))$. We set $\mathcal{T}(\cdot)$ as the cascade of intensity transformations and geometric transformations. When $\mathcal{T}(\cdot)$ takes an input $\img$ as the raw image, we apply gamma correction within a range of [0.5, 2.0], as well as a set of random affine transformation, consisting of random scale within a range of [0.8, 1.3], random rotation within a range of [$-45^{\circ}$, $45^{\circ}$], and random translation within a range of [$-10\%$, $10\%$]. Whereas when $\mathcal{T}(\cdot)$ takes the input as the embedding $s$ of the image $\img$, we ignore the intensity transformation and apply only the random affine transformation with the same random state corresponding to those applied with the raw image $\img$. These augmentations operate on \texttt{PyTorch} tensors and are publicly-available at \url{https://github.com/PhoenixDL/rising.git}

\section{Pre-trained cluster assignment maps for different $K$}\label{sec:clusters}
We show in Fig. \ref{fig:impact_k} the pre-trained cluster assignment for different number of clusters $K$. One can see that a small $K$ learns a collapsed cluster assignment, which leads to a weak segmentation performance (see Table \ref{tab:impact_k}). This is probably because a small cluster number reduces the capacity to capture the structure information of such images. With the increase of $K$, the cluster maps become more balanced and gradually reflect the cardiac structures of the image. However, when taking a large cluster numbers $K=60$, the resulted clusters over-segment the images, leading to fractured anatomical structures. In this case, it can decrease the downstream segmentation tasks.

\begin{figure}[H]
	\centering
	\begin{adjustbox}{max width=0.6\linewidth}
		\renewcommand{\arraystretch}{1}
		\setlength{\tabcolsep}{1pt}
		\begin{small}
			\begin{tabular}{cccccc}
				\includegraphics[width=0.166\linewidth]{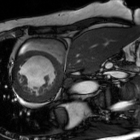} &
				\includegraphics[width=0.166\linewidth]{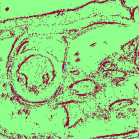}& 
				\includegraphics[width=0.166\linewidth]{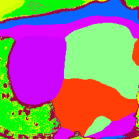}&
				\includegraphics[width=0.166\linewidth]{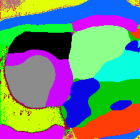} & 
				\includegraphics[width=0.166\linewidth]{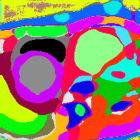}& 
				\includegraphics[width=0.166\linewidth]{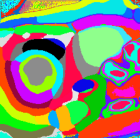} \\
				\includegraphics[width=0.166\linewidth]{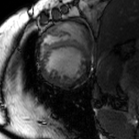} &
				\includegraphics[width=0.166\linewidth]{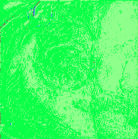}& 
				\includegraphics[width=0.166\linewidth]{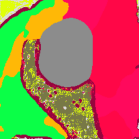}&
				\includegraphics[width=0.166\linewidth]{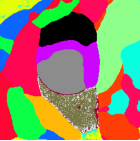} & 
				\includegraphics[width=0.166\linewidth]{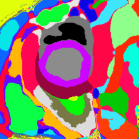}& 
				\includegraphics[width=0.166\linewidth]{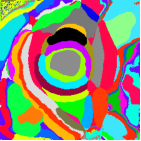} \\
				\includegraphics[width=0.166\linewidth]{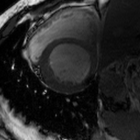} &
				\includegraphics[width=0.166\linewidth]{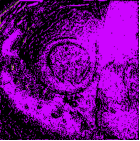}& 
				\includegraphics[width=0.166\linewidth]{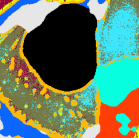}&
				\includegraphics[width=0.166\linewidth]{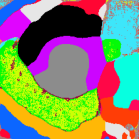} & 
				\includegraphics[width=0.166\linewidth]{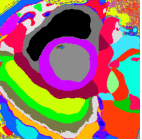}& 
				\includegraphics[width=0.166\linewidth]{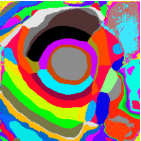} \\
				Image 
				& \tabincell{c}{$K\!=\!5$} 
				& \tabincell{c}{$K\!=\!10$} 
				& \tabincell{c}{$K\!=\!20$}
				& \tabincell{c}{$K\!=\!40$}
				& \tabincell{c}{$K\!=\!60$}
			\end{tabular}
		\end{small}
	\end{adjustbox}
	\caption{Pre-trained cluster assignment with respect to different $K$}
	\label{fig:impact_k}
\end{figure}

\section{Impact of batch size on pre-training}

As contrastive-based pre-training often requires a large batch size, which is hard to satisfy for dense prediction tasks, such as segmentation. Our last ablation study investigates the performance stability given relatively small batch size $\mathcal{B}$. Table \ref{tab:batch_size} lists the 3D test DSC for ACDC dataset with reduced batch size for contrastive-based and one of our best performing variant.  It can be seen that with reduced batch size, segmentation performances reduces for both methods. However, our proposed method still outperforms contrastive-based approach for almost all cases given a very small batch.

\begin{table}[h!]
	\centering
	\caption{Impact of batch size $\mathcal{B}$}
	\label{tab:batch_size}
	\begin{adjustbox}{max width=0.55\linewidth}
		\setlength{\tabcolsep}{3pt}
		\renewcommand{\arraystretch}{1}
		\begin{tabular}{c|ccc|ccc|ccc}
			\toprule
			\multirow[b]{2}{*}{\phz$\mathcal{B}$\phz} 
			& \multicolumn{3}{c|}{\textbf{ACDC-LV}}
			& \multicolumn{3}{c|}{\textbf{ACDC-RV}}
			& \multicolumn{3}{c}{\textbf{ACDC-Myo}}\\
			\cmidrule(l{0pt}r{0pt}){2-4} 
			\cmidrule(l{0pt}r{0pt}){5-7} 
			\cmidrule(l{0pt}r{0pt}){8-10} 
			 & \bfseries 1 scan & \bfseries 2 scans & \bfseries 4 scans & \bfseries 1 scan & \bfseries 2 scans & \bfseries 4 scans & \bfseries 1 scans & \bfseries 2 scans & \bfseries 4 scans \\
			\midrule
			 & \multicolumn{9}{c}{Contrast (\emph{Enc+Dec})} \\
			\midrule
			\phz6 & 72.16 & 86.02 & 87.37 
			& 63.30 & 69.25 & 72.25 
			& 61.51 & 73.57 & 76.51 \\
            12 & 75.52 & 84.23 & 88.31 
            & 63.26 & 71.41 & 73.89 
            & 65.88 & 76.26 & 78.77 \\
            18 & 77.98 &85.97 &88.42 
            & 66.47 & 72.82& 76.69 
            & 64.96 &76.98& 78.76\\
			\midrule
			& \multicolumn{9}{c}{Ours (\emph{MI+CC})} \\
			\midrule
			\phz6 & 81.61 & 85.76 & 88.21 
			& 67.39 & 67.04 & 66.08 
			& 71.18 & 77.41 & 80.20 \\
			12 & 81.46 & 87.89 & 88.72 
			& 68.15 & 76.33 & 74.96 
			& 74.84 & 78.54 & 82.58 \\
			18 & 84.04 & 88.52 & 89.31 
			& 76.86 & 79.13 & 75.92 
			& 76.93 & 79.59 & 81.97 \\
			\bottomrule
		\end{tabular}
	\end{adjustbox}
	 
\end{table}

\newpage
\section{Visual results for segmentation}
\begin{figure}[h!]
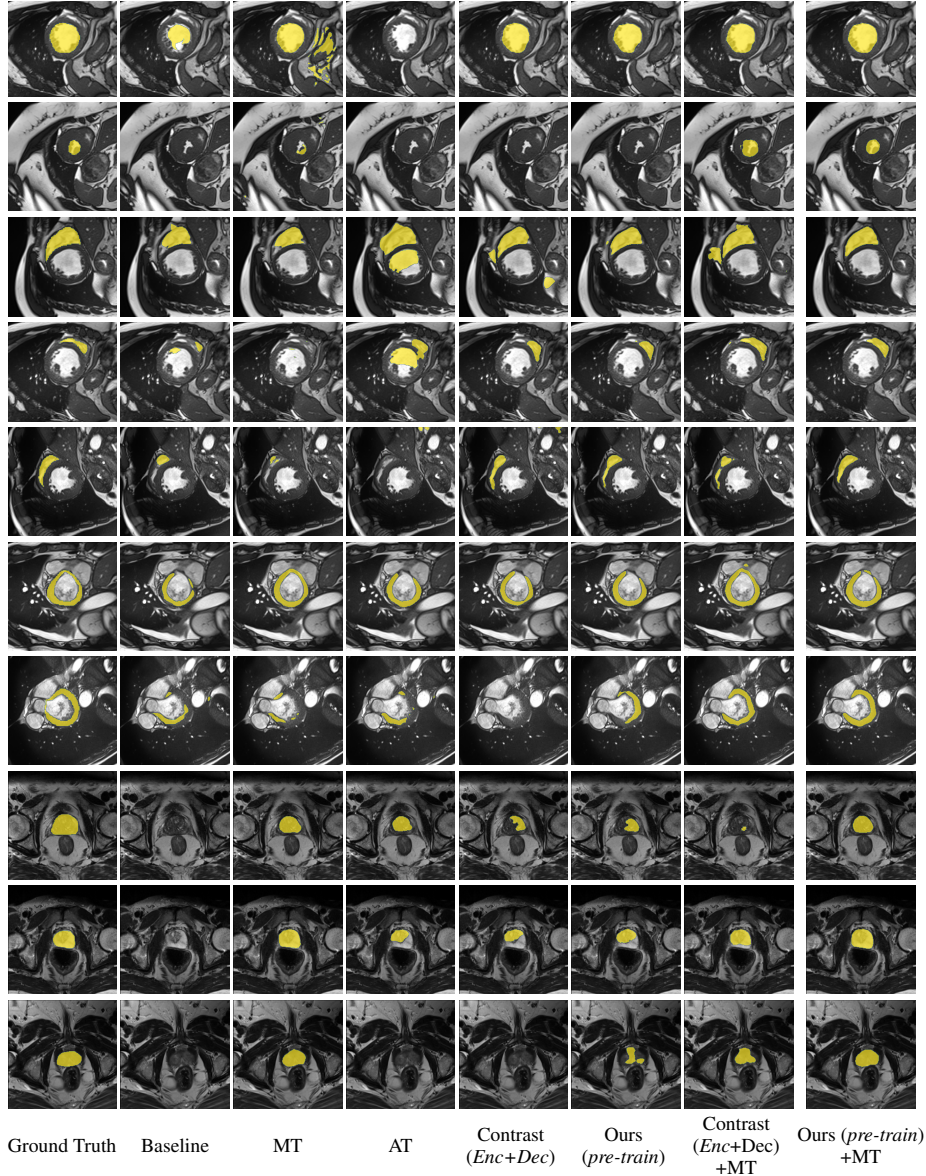

    \centering
    \begin{adjustbox}{max width=0.9\linewidth}
        \begin{footnotesize}
		\setlength{\tabcolsep}{1pt}
		\renewcommand{\arraystretch}{1}
        \begin{tabular}{cccccccc}
         \visinpsect{LV}{1}  \\
          \visinpsect{LV}{2}  \\
        \visinpsect{RV}{1}  \\
      \visinpsect{RV}{3}  \\
        \visinpsect{RV}{4}  \\
        \visinpsect{MYO}{1}  \\
        \visinpsect{MYO}{2}  \\
        \visinpsect{prostate}{1}  \\
        \visinpsect{prostate}{2}  \\
        \visinpsect{prostate}{3}  \\
        Ground Truth & Baseline & MT & AT & \tabincell{c}{Contrast\\(\emph{Enc+Dec})} & \tabincell{c}{Ours \\(\emph{pre-train})} & {\tabincell{c}{Contrast \\(\emph{Enc}+{Dec})\\+MT}}  & \tabincell{c}{Ours (\emph{pre-train})\\ +MT}
        
        \end{tabular}
        \end{footnotesize}
\end{adjustbox}
    \caption{Visual comparison of tested methods on test images. Rows 1–2: \textbf{LV};
Rows 3–5: \textbf{RV}; Rows 6–7: \textbf{Myo}; Row 8-10: \textbf{P\textsc{romise}12}. }
    \label{fig:visual_inspection}
\end{figure}

\end{document}